\newtheorem{theorem}{Theorem}
\newtheorem{lemma}{Lemma}
\begin{document}

\title{Throughput and Delay Scaling in Supportive Two-Tier Networks}
\author{Long~Gao,\thanks{This work was supported in part by NSF under Grant 0726740, DoD-AFOSR under Grant FA9550-09-1-0107, DoD-DTRA under Grant HDTRA1-07-1-0037, the National University of Singapore under Grant 263-000-589-133, the China 863 Program under Grant 2011AA100706, the NSFC under Grant 60972073, the China National Great Science Specific Project under Grant 2010ZX03001-003, and the China 973 Program under Grant 2009CB320407.}\thanks{L. Gao and S. Cui are with the Department of Electrical and Computer
Engineering, Texas A\&M University, College Station, TX, 77843, USA (emails: lgao@tamu.edu, and cui@tamu.edu).}  \thanks{R. Zhang is with the Department of Electrical and Computer Engineering, National University of Singapore (e-mail: elezhang@nus.edu.sg). He is also with the Institute for Infocomm Research, A*STAR, Singapore (email:rzhang@i2r.a-star.edu.sg).}\thanks{C. Yin is with the Key Laboratory of Universal Wireless Communications, Ministry of Education, Beijing University of Posts and Telecommunications, China (email: ccyin@ieee.org).}Rui~Zhang,~Changchuan Yin, and
        Shuguang~Cui}

\maketitle

\begin{abstract}

Consider a wireless network that has two tiers with different priorities: a primary tier vs. a secondary tier, which is an emerging network scenario with the advancement of cognitive radio technologies. The primary tier consists of randomly distributed legacy nodes of density $n$, which have an absolute priority to access the spectrum. The secondary tier consists of randomly distributed cognitive nodes of density $m=n^\beta$ with $\beta\geq 2$, which can only access the spectrum opportunistically to limit the interference to the primary tier. Based on the assumption that the secondary tier is allowed to route the packets for the primary tier, we investigate the throughput and delay scaling laws of the two tiers in the following two scenarios: i) the primary and secondary nodes are all static; ii) the primary nodes are static while the secondary nodes are mobile. With the proposed protocols for the two tiers, we show that the primary tier can achieve a per-node throughput scaling of $\lambda_p(n)=\Theta\left(1/\log n\right)$ in the above two scenarios. In the associated delay analysis for the first scenario, we show that the primary tier can achieve a delay scaling of $D_p(n)=\Theta\left(\sqrt{n^\beta\log n}\lambda_p(n)\right)$ with $\lambda_p(n)=O\left(1/\log n\right)$. In the second scenario, with two mobility models considered for the secondary nodes: an i.i.d. mobility model and a random walk model, we show that the primary tier can achieve delay scaling laws of $\Theta(1)$ and $\Theta(1/S)$, respectively, where $S$ is the random walk step size. The throughput and delay scaling laws for the secondary tier are also established, which are the same as those for a stand-alone network.

\end{abstract}

\IEEEpeerreviewmaketitle

\section{Introduction}

The explosive growth of large-scale wireless applications
motivates people to study the fundamental limits over wireless
networks. Consider a randomly distributed wireless network with
density $n$ over a unit area, where the nodes are randomly grouped
into one-to-one source-destination (S-D) pairs. Initiated by the
seminal work in~\cite{Gupta:Capacity}, the throughput scaling laws
for such a network have been studied extensively in the
literature~\cite{Francheschetti:Closing}-\cite{Matthias:Mobility}.
For static networks, it is shown in~\cite{Gupta:Capacity} that the
traditional multi-hop transmission strategy can achieve a
throughput scaling of $\Theta\left(1/\sqrt{n\log
n}\right)$\footnote{We use the following notations throughout this
paper: i) $f(n)=O(g(n))$ means that there exists a constant $c$
and integer $N$ such that $f(n)<cg(n)$ for $n>N$; ii)
$f(n)=\Omega(g(n))$ means that $g(n)=O(f(n))$; iii)
$f(n)=\Theta(g(n))$ means that $f(n)=O(g(n))$ and $g(n)=O(f(n))$;
iv) $f(n)=o(g(n))$ means that $f(n)/g(n)\rightarrow 0$ as
$n\rightarrow\infty$; v) $f(n)=w(g(n))$ means that $g(n)=o(f(n))$.} per S-D pair. Such a
throughput scaling can be improved when the nodes are able to
move. It is shown in~\cite{Matthias:Mobility}-\cite{Gamal:Optimal}
that a per-node throughput scaling of $\Theta(1)$ is achievable in
mobile networks by exploring a special two-hop transmission scheme. Unfortunately, the throughput improvement in mobile networks
incurs a large packet
delay~\cite{Michael:Capacity},~\cite{Gamal:Optimal}, which is
another important performance metric in wireless networks. In
particular, it is shown in~\cite{Michael:Capacity} that the
constant per-node throughput is achieved at the cost of a delay
scaling of $\Theta(n)$. The delay-throughput tradeoffs for static and mobile
networks have also been investigated
in~\cite{Gamal:Optimal}-\cite{Ozgur:Throughput}. Specifically, it is shown in~\cite{Gamal:Optimal} that for the static network, the optimal delay-throughput tradeoff is given by $D(n)=\Theta(n\lambda(n))$ for $\lambda(n)=O\left(1/\sqrt{n\log n}\right)$, where $D(n)$ and $\lambda(n)$ are the delay and throughput per S-D pair, respectively; for the mobile network, in which nodes move according to a random walk (RW) model with a fixed step size $S=1/n$, a throughput of $\Theta(1)$ is achievable with the delay scaling as $\Theta(n\log n)$. In~\cite{Ying: Optimal}, the optimal delay-throughput tradeoffs under the RW node mobility model with an arbitrary step size $S$ is studied, where it is shown that the maximum throughput is $\Theta\left(\sqrt{D/n}\right)$ with $S=o(1)$ and $D=w(|\log S|/S^2)$.

The aforementioned literature mainly focuses on the delay and throughput
scaling laws for a single network. Recently, the emergence of
cognitive radio networks motives people to extend the result from
a single network to overlaid networks. Consider a
licensed primary network and a cognitive secondary network
coexisting over a unit area. The primary network has the absolute
priority to use the spectrum, while the secondary network can only
access the spectrum opportunistically to limit the interference to
the primary network. Based on such assumptions, a two-tier non-supportive network consisting of a primary tier and a secondary tier is considered in~\cite{Jeon:Cognitive}, where inter-tier packet relaying is not allowed. With an elegant transmission protocol, it is shown that by defining a preservation region around each primary node and assuming that the secondary tier knows the locations of all the primary nodes, both tiers can achieve the same throughput scaling law as a stand-alone wireless network in~\cite{Gupta:Capacity}, while the secondary tier may suffer from a finite outage probability. In~\cite{Yin:Scaling}, the same two-tier network setup as in~\cite{Jeon:Cognitive} is studied except that the secondary tier is assumed to only know the locations of the primary transmitters. It is shown that both tiers can still achieve the same delay-throughput tradeoffs as stand-alone networks in \cite{Gamal:Optimal}. Besides, the outage issue of the secondary tier is solved by introducing a new definition of the preservation region. However, such results are obtained
without considering possible positive interactions between the
primary network and the secondary network. In practice, the
secondary network, which is usually deployed after the existence
of the primary network for opportunistic spectrum access, can
transport data packets not only for itself but also for the
primary network due to their cognitive nature. As such, it is meaningful to investigate whether the
throughput and/or delay performance of the primary network (whose protocol was fixed before the deployment of the secondary tier) can be
improved with the opportunistic aid of the secondary network, while assuming the
secondary network still capable of keeping the same throughput and
delay scaling laws as the case where no supportive actions are
taken between the two networks.

In this paper, we define a \textit{supportive} two-tier network with a primary tier and a secondary tier as follows: The secondary tier is allowed to supportively relay the data
packets for the primary tier in an opportunistic way (i.e., the secondary users only utilize empty spectrum holes\footnote{As shown later, we introduce a concept of preservation region in the secondary protocol to ensure that only secondary nodes outside the preservation regions are allowed to transmit. As such, the spectrum holes refer to the spectrum resource outside the preservation regions.} in between primary transmissions even when they help with relaying the primary packets), whereas the primary tier is only required to transport its own data. Note that the potential security issues between the two tiers are important but not considered in this paper. Here we assume that the secondary nodes have the knowledge of the primary nodes' codebook, which is a common and reasonable assumption extensively used in the literature~\cite{Goldsmith:Breaking}-\cite{Devroye:Achievable}. Let $n$ and $m=n^{\beta}$ denote the node densities of the primary tier and the secondary tier, respectively. We investigate the throughput and delay scaling laws for
such a supportive two-tier network with $\beta\geq 2$ in the following two scenarios: i) the primary and secondary nodes are all static; ii) the primary nodes are static while the secondary nodes are mobile. With specialized protocols for the secondary tier, we show that the primary tier can achieve a per-node throughput scaling of $\lambda_p(n)=\Theta\left(1/\log n\right)$ in the above two scenarios with a classic time-slotted multi-hop transmission protocol similar to the one in~\cite{Gupta:Capacity}. In the associated delay analysis for the first scenario, we show that the primary tier can achieve a delay scaling of $D_p(n)=\Theta\left(\sqrt{n^\beta\log n}\lambda_p(n)\right)$ with $\lambda_p(n)=O\left(1/\log n\right)$. In the second scenario, with two mobility models considered for the secondary nodes: an i.i.d. mobility model and a random walk model, we show that the primary tier can achieve delay scaling laws of $\Theta(1)$ and $\Theta(1/S)$, respectively, where $S$ is the random walk step size. The throughput and delay scaling laws for the secondary tier are also established, which are the same as those for a stand-alone network. Note that generally speaking, $\beta$ could take any non-negative values. In this paper, we only consider the regime of $\beta\geq 2$ for analytical simplicity. As we will see later, such a condition is critical in the proofs of the delay and throughput results for the primary tier (i.e., \textit{Theorems 1-6}). We also want to point out that the results for the secondary tier are more general, which can hold in the regime of $\beta >1$.

Note that in~\cite{Gupta:Capacity}~\cite{Liu:Pernode}, the authors also pointed out that adding a large amount of extra pure relay nodes (which only relay traffic for other nodes), the throughput scaling can be improved at the cost of excessive network deployment. However, there are two key differences between their results and our results. First, in this paper, the added extra relays (the secondary nodes) only access spectrum opportunistically (i.e., they need not to be pre-allocated with any primary spectrum resource, given their cognitive nature), while the extra relay nodes mentioned in~\cite{Gupta:Capacity}~\cite{Liu:Pernode} are like regular primary nodes (just without generating their own traffic) who need to be assigned with certain primary spectrum resource. As such, based on the cognitive feature of the secondary nodes considered in this paper, the primary throughput improvement could be achieved in an existing primary network without the need to change the primary resource allocation scheme, while in~\cite{Gupta:Capacity}~\cite{Liu:Pernode}, the extra relay deployment has to be considered in the initial primary network design phase for its protocol to utilize the relays. In other words, the problem considered in this paper is how to improve the throughput scaling over an existing primary network by adding another supportive network tier (the secondary cognitive tier), where the primary network is already running a certain resource allocation scheme as we will discuss later in the paper, which is different from the networking scenarios considered in~\cite{Gupta:Capacity}~\cite{Liu:Pernode}. Second, in this paper, the extra relays are also source nodes on their own (i.e., they also initiate and support their own traffic within the secondary tier); and as one of the main results, we will show that even with their help to improve the primary-tier throughput, these extra relays (i.e., the secondary tier) could also achieve the same throughput scaling for their own traffic as a stand-alone network considered in~\cite{Gupta:Capacity}.

The rest of the paper is organized as follows. The system model is described and the main results are summarized in Section II. The proposed protocols for the primary
and secondary tiers are described in Section III. The delay and
throughput scaling laws for the primary tier are derived in
Section IV. The delay and throughput scaling laws for the
secondary tier are studied in Section V. Finally, Section VI
summarizes our conclusions.

\section{System Model and Main Results}

Consider a two-tier network with a primary tier and a denser secondary tier over a unit square. We assume that the nodes of the primary tier, so-called primary nodes, are static, and consider the following two scenarios: i) the nodes of the secondary tier, so-called secondary nodes, are also static; ii) the secondary nodes are mobile. We
first describe the network model, the interaction model between
the two tiers, the mobility models for the mobile secondary nodes in the second scenario, and the definitions of throughput and delay.
Then we summarize the main results in terms of the delay and throughput scaling laws for the proposed two-tier network.

\subsection{Network Model}

The primary nodes are distributed according
to a Poisson point process (PPP) of density $n$ and randomly
grouped into one-to-one source-destination (S-D) pairs. Likewise,
the secondary nodes are
distributed according to a PPP of density $m$ and randomly grouped
into S-D pairs. We assume that the density of the secondary tier
is higher than that of the primary tier, i.e.,
\begin{equation} \label{density}
m=n^\beta
\end{equation}
where we consider the case with $\beta\geq 2$. The primary tier and the
secondary tier share the same time, frequency, and space, but with
different priorities to access the spectrum: The former one is the
licensed user of the spectrum and thus has a higher priority; and
the latter one can only opportunistically access the spectrum to
limit the resulting interference to the primary tier, even when it helps with relaying the primary packets.

For the wireless channel, we only consider the large-scale
pathloss and ignore the effects of shadowing and small-scale
multipath fading. As such, the channel power gain $g(r)$ is given
as
\begin{equation} \label{pathloss}
g(r)=r^{-\alpha}
\end{equation}
where $r$ is the distance between the transmitter (TX) and the
corresponding receiver (RX), and $\alpha >2$ denotes the pathloss
exponent.

The ambient noise is assumed to be additive white Gaussian noise
(AWGN) with an average power $N_0$. During each time slot, we
assume that the $i$th primary TX-RX pair can achieve the following Shannon rate:
\begin{equation}  \label{prate}
R_p(i)=\log \left(1+\frac{P_p(i)g\left(\Vert
X_{p,tx}(i)-X_{p,rx}(i)\Vert \right)}{N_0+I_p(i)+I_{sp}(i)}\right)
\end{equation}
where the channel bandwidth is normalized to be unity for
simplicity, $\Vert \cdot \Vert$ denotes the norm operation,
$P_p(i)$ is the transmit power of the $i$th primary pair,
$X_{p,tx}(i)$ and $X_{p,rx}(i)$ are the TX and RX locations of
$i$th primary pair, respectively, $I_p(i)$ is the sum interference
from all other primary TXs, $I_{sp}(i)$ is the sum interference
from all the secondary TXs. Likewise, the data rate of the $j$th
secondary TX-RX pair is given by
\begin{equation} \label{srate}
R_s(j)=\log \left(1+\frac{P_s(j)g\left(\Vert
X_{s,tx}(j)-X_{s,rx}(j)\Vert \right)}{N_0+I_s(j)+I_{ps}(j)}\right)
\end{equation}
where $P_s(j)$ is the transmit power of the $j$th secondary pair,
$X_{s,tx}(j)$ and $X_{s,rx}(j)$ are the TX and RX locations of the
$j$th secondary pair, respectively, $I_s(j)$ is the sum
interference from all other secondary TXs to the RX of the $j$th
secondary pair, and $I_{ps}(j)$ is the sum interference from all
primary TXs.

\subsection{Interaction Model}

As shown in the previous
work~\cite{Jeon:Cognitive},~\cite{Yin:Scaling}, although the
opportunistic data transmission in the secondary tier does not
degrade the scaling law of the primary tier, it may reduce the
throughput in the primary tier by a constant factor due to the
fact that the interference from the secondary tier to the
primary tier cannot be reduced to zero. To completely
compensate the throughput degradation or even improve the
throughput scaling law of the primary tier in the two-tier setup,
we could allow certain positive interactions between the two
tiers. Specifically, we assume that the secondary nodes are
willing to act as relay nodes for the primary tier, while the
primary nodes are not assumed to do so. When a primary source node
transmits packets, the surrounding secondary nodes could pretend
to be primary nodes to relay the packets (which is feasible since they are software-programmable cognitive radios). Note that, these ``fake''
primary nodes do not have the same priority as the real primary
nodes in terms of spectrum access, i.e., they can only use the
spectrum opportunistically in the same way as a regular secondary
node. The assumption that the secondary tier is allowed to relay the primary packets is the essential difference between our model and the models
in~\cite{Jeon:Cognitive},~\cite{Yin:Scaling}.

\subsection{Mobility Model}

In the scenario where the secondary nodes are mobile, we assume that the positions of the primary nodes are fixed
whereas the secondary nodes stay static in one primary time
slot\footnote{As we will see in Section III, the data transmission
is time-slotted in the primary and secondary tiers.} and change
their positions at the next slot. In particular, we consider the
following two mobility models for the secondary nodes.

\textbf{Two-dimensional i.i.d. mobility model}~\cite{Michael:Capacity}: The secondary nodes are uniformly and randomly distributed in the unit area at each primary time slot. The node locations are independent of each other, and independent from time slot to time slot, i.e., the nodes are totally reshuffled over each primary time slot.

\textbf{Two-dimensional RW model}~\cite{Gamal:Optimal},~\cite{Ying: Optimal}: We divide the unit square into $1/S$ small-square RW-cells, each of them with size $S$. The RW-cells are indexed by $(x,y)$, where $x,y\in\{1,2,\cdots,1/\sqrt{S}\}$. A secondary node that stays in a RW-cell at a particular primary time slot will move to one of its eight neighboring RW-cells at the next slot with equal probability (i.e., 1/8). For the convenience of analysis, when a secondary node hits the boundary of the unit square, we assume that it jumps over the opposite edge to eliminate the edge effect~\cite{Gamal:Optimal},~\cite{Ying: Optimal}. The nodes within a RW-cell are uniformly and randomly distributed. Note that the unit square are also divided into primary cells and secondary cells in the proposed protocols as discussed in Section III, which are different from the RW-cells defined
  above. In this paper, we only consider the case where the size of the
  RW-cell is greater than or equal to that of the primary cell.

\subsection{Throughput and Delay}

The \textit{throughput per S-D pair} (per-node throughput) is
defined as the average data rate that each source node can
transmit to its chosen destination as
in~\cite{Jeon:Cognitive},~\cite{Yin:Scaling}, which is asymptotically determined by the network density. Besides, the \textit{sum throughput} is defined
as the product between the throughput per S-D pair and the number
of S-D pairs in the network. In the following, we use
$\lambda_{p}(n)$ and $\lambda_{s}(m)$ to denote the throughputs
per S-D pair for the primary tier and the secondary tier,
respectively; and we use $T_{p}(n)$ and $T_{s}(m)$ to denote the
sum throughputs for the primary tier and the secondary tier,
respectively.

The delay of a primary packet is defined as the average number of
primary time slots that it takes to reach the primary destination
node after the departure from the primary source node. Similarly,
we define the delay of a secondary packet as the average number of
secondary time slots for the packet to travel from the secondary
source node to the secondary destination node. We use $D_p(n)$ and
$D_s(m)$ to denote packet delays for the primary tier and the
secondary tier, respectively. For simplicity, we use a fluid
model~\cite{Gamal:Optimal} for the delay analysis, in which we
divide each time slot to multiple packet slots and the size of the
data packets can be scaled down with the increase of network
density.

\subsection{Main Results}

We summarize the main results in terms of the throughput and delay scaling laws for the supportive two-tier network here. We first present the results for the scenario where the primary and secondary nodes are all static and then describe the results for the scenario with mobile secondary nodes.
\begin{description}
  \item [i)]\hspace{-8mm}The primary and secondary nodes are all static.
  \begin{itemize}
    \item It is shown that the primary tier can achieve a per-node throughput scaling of $\lambda_p(n)=\Theta\left(1/\log n\right)$ and a delay scaling of $D_p(n)=\Theta\left(\sqrt{n^\beta\log n}\lambda_p(n)\right)$ for $\lambda_p(n)=O\left(1/\log n\right)$.
    \item It is shown that the secondary tier can achieve a per-node throughput scaling of $\lambda_s(m)=\Theta\left(\frac{1}{\sqrt{m\log m}}\right)$ and a delay scaling of $D_s(m)=\Theta(m\lambda_s(m)),~\textrm{for}~\lambda_s(m)=O\left(\frac{1}{\sqrt{m\log m}}\right)$.
  \end{itemize}
  \item [ii)]\hspace{-7mm}The primary nodes are static and the secondary nodes are mobile.
  \begin{itemize}
    \item It is shown that the primary tier can achieve a per-node throughput scaling of $\lambda_p(n)=\Theta\left(1/\log n\right)$, and delay scaling laws of $\Theta(1)$ and $\Theta(1/S)$ with the i.i.d. mobility model and the RW mobility model, respectively.
    \item It is shown that the secondary tier can achieve a per-node throughput scaling of $\lambda_s(m)=\Theta(1)$, and delay scaling laws of $\Theta(m)$ and $\Theta\left(m^2S\log\frac{1}{S}\right)$ with the i.i.d. mobility model and the RW mobility model, respectively.
  \end{itemize}
\end{description}

\section{Network Protocols}
In this section, we describe the proposed protocols for the
primary tier and the secondary tier, respectively. The primary
tier deploys a modified time-slotted multi-hop transmission scheme
from those for the primary network
in~\cite{Jeon:Cognitive},~\cite{Yin:Scaling}, while the secondary
tier chooses its protocol according to the given primary transmission scheme. In
the following, we use $p(E)$ to represent the probability of event
$E$, and claim that an event $E_n$ occurs with high probability
(w.h.p.) if $p(E_n)\to 1$ as $n\to \infty$.

\subsection{The Primary Protocol}
The main sketch of the primary protocol is given as follows:

\noindent i) Divide the unit square into small-square primary cells with
size $a_p(n)$. In order to maintain the full connectivity within
the primary tier even without the aid of the secondary tier and enable the possible support from the secondary tier (see \textit{Theorem 1} for details), we
have $a_p(n)\geq \sqrt 2\beta\log n/n$ such that each cell has at least one
primary node w.h.p..

\noindent ii) Group every $N_c$ primary cells into a primary cluster. The
cells in each primary cluster take turns to be active in a
round-robin fashion. We divide the transmission time into TDMA
frames, where each frame has $N_c$ primary time slots that correspond
to the number of cells in each primary cluster. Note that the
number of primary cells in a primary cluster has to satisfy $N_c\geq 64$ such that we can appropriately arrange the preservation
regions and the collection regions, which will be formally defined later
in the secondary protocol. For convenience, we take $N_c=64$ throughout the paper.

\noindent iii) Define the S-D data path along which the packets are routed from
the source node to the destination node: The data path follows a
horizontal line and a vertical line connecting the source node and
the destination node, which is the same as that defined
in~\cite{Jeon:Cognitive},~\cite{Yin:Scaling}. Pick an arbitrary
node within a primary cell as the designated relay node, which is
responsible for relaying the packets of all the data paths passing
through the cell.

\noindent iv) When a primary cell is active, each primary source node in it
takes turns to transmit one of its own packets with probability
$p$. The parameter $p$ is used for access control of the primary packets such that the queues in the mobile secondary nodes are stable. As shown later in \textit{Theorems 5} and \textit{6}, $p$ can take any positive values less than one. Afterwards, the designated relay node transmits one packet
for each of the S-D paths passing through the cell. Note that a primary source node could also be a designated relay node. If this is the case, the source node first sends one packet of its own and then sends the packets for the other primary nodes. The above
packet transmissions follow a time-slotted pattern within the active
primary time slot, which is divided into packet slots as shown in Fig.~\ref{primary protocol}. Each source
node reserves a packet slot no matter it transmits or not. If the
designated relay node keeps silent, i.e., has no packets to transmit, it does not
reserve any packet slots. For each packet, if the destination
node is found in the adjacent cell, the packet will be directly
delivered to the destination. Otherwise, the primary transmitter blindly broadcasts the packet to its neighboring nodes and it is the responsibility of the designated relay node in one adjacent cell along the data path to store the packet for future transmissions. At each packet transmission, the TX node transmits with
power of $Pa_{p}^{\frac{\alpha}{2}}(n)$, where $P$ is a constant.

\noindent v) We assume that all the packets for each S-D pair are labelled
with serial numbers (SNs). The following handshake mechanism is
used when a TX node is scheduled to transmit a packet to a
destination node: The TX sends a request message to initiate the
process; the destination node replies with the desired SN; if the
TX has the packet with the desired SN, it will send the packet to
the destination node; otherwise, it stays idle. As we will see in
the proposed secondary protocol for the scenario with mobile secondary nodes, the helping secondary relay nodes will
take advantage of the above handshake mechanism to remove the
outdated (already-delivered) primary packets from their queues. We
assume that the length of the handshake message is negligible
compared to that of the primary data packet in the throughput
analysis for the primary tier as discussed in Section IV.

\begin{figure}
\centering
\includegraphics[width=2.5in]{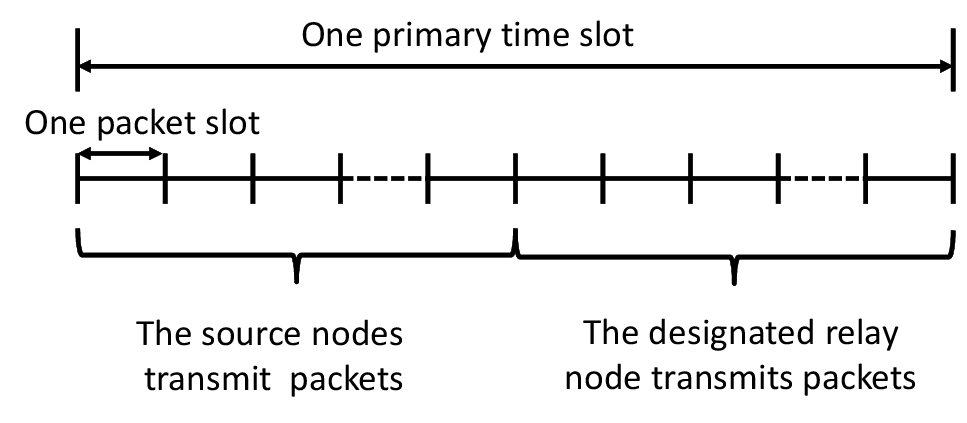}
\caption{The illustration of the primary protocol.}
\label{primary protocol}
\end{figure}

Note that running of the above protocol for the primary tier is
independent of whether the secondary tier is present or not. When
the secondary tier is absent, the primary tier can achieve the
throughput scaling law as a stand-alone network discussed in~\cite{Gupta:Capacity}. When the
secondary tier is present as shown in Section IV, the primary tier
can achieve a better throughput scaling law with the aid of the
secondary tier.

\subsection{The Secondary Protocol}

In this section, we present the proposed secondary protocol for the following two scenarios: i) the scenario with static secondary nodes, and ii) the scenario with mobile secondary nodes. In the scenario where the primary and secondary nodes are all static, the secondary nodes chop the received primary packets into smaller pieces suitable for secondary-tier transmissions. The small data pieces will be reassembled before they are delivered to the primary destination nodes. In the scenario where the secondary nodes are mobile, the received packets are stored in the secondary nodes and delivered to the corresponding primary destination node only when the secondary nodes move into the neighboring area of the primary destination node. In both scenarios, such helps are achieved with the secondary nodes opportunistically exploring the primary spectrum without hurting the original primary performance. As such, the primary tier is expected to achieve better throughput and/or delay scaling laws.

\vspace{0.25cm}
\noindent\textbf{Protocol for Static Secondary Tier}
\begin{figure}
\centering
\includegraphics[width=2.5in]{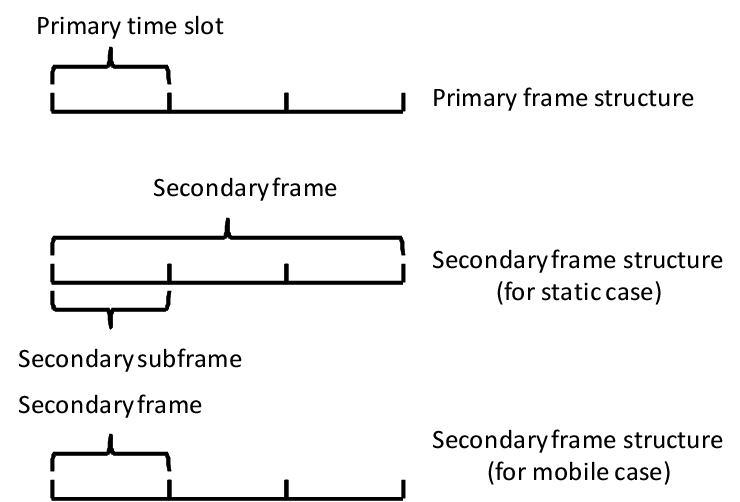}
\caption{Frame relationship between the two tiers.}
\label{frame}
\end{figure}

We assume that the secondary nodes have the necessary cognitive
features such as software-programmability to ``pretend'' as primary nodes such that they could be
chosen as the designated primary relay nodes within a particular
primary cell. As later shown by \textit{Lemma~\ref{lemma2}} in Section IV,
a randomly selected designated relay node for the primary packet
in each primary cell is a secondary node w.h.p.. Once a secondary node is chosen and fixed to be a designated relay node, it keeps listening instead of relaying primary packets when its associated primary cell is active\footnote{Actually, none of the secondary nodes within a active primary cell are allowed to transmit according to the secondary protocol as will be shown later., such that only primary source nodes within the cell transmit packets. As will be shown later in the proof of \textit{Theorem 1}, this operation will significantly improve the throughput of the primary tier. In this scenario, the primary packets are relayed not only by the designated secondary relay nodes but also by other secondary nodes, which will be explained in details next.}

We use the time-sharing technique to guarantee successful packet
deliveries from the secondary nodes to the primary destination
nodes as follows. We divide each secondary frame into three
equal-length subframes, such that each of them has the same length
as one primary time slot as shown in~Fig.~\ref{frame}. The first subframe is used to transmit
the secondary packets within the secondary tier. The second
subframe is used to relay the primary packets to the next relay
nodes. Accordingly, the third subframe of each secondary frame is
used to deliver the primary packets from the intermediate
destination nodes\footnote{An ``intermediate'' destination node of
a primary packet within the secondary tier is a chosen secondary
node in the primary cell within which the final primary
destination node is located.} in the secondary tier to their final
destination nodes in the primary tier. Specifically, for the first
subframe, we use the following protocol:
\begin{itemize}
\item Divide the unit area into square secondary cells with size
$a_s(m)$. In order to maintain the full connectivity within the
secondary tier, we have to guarantee $a_s(m) \geq 2\log m/m$ with
a similar argument to that in the primary tier. Given the assumption of $\beta\geq 2$, the size of the secondary cell is much smaller than that of the primary cell, i.e., $a_s(m)\ll a_p(n)$.

\item Group the secondary cells into secondary clusters, with each
secondary cluster of 64 cells. Each secondary cluster also follows
a 64-TDMA pattern to communicate, which means that the first subframe is
divided into 64 secondary time slots.

\item Define a preservation region as nine primary cells centered
at an active primary TX and a layer of secondary cells around
them, shown as the square with dashed edges in
Fig.~\ref{preservation}. Only the secondary TXs in an active
secondary cell outside all the preservation regions can transmit
data packets; otherwise, they buffer the packets until the
particular preservation region is cleared. When an active
secondary cell is outside the preservation regions in the first
subframe, it allows the transmission of one packet for each
secondary source node and for each S-D path passing through the cell
in a time-slotted pattern within the active secondary time slot w.h.p.. The routing of secondary packets follows similarly defined data paths as those in the primary tier.

\item At each transmission, the active secondary TX node can only
transmit to a node in its adjacent cells with power of $P
a_{s}^{\frac{\alpha}{2}}(m)$.
\end{itemize}

In the second subframe, only secondary nodes who carry primary
packets take the time resource to transmit. Note that each primary
packet is broadcasted from the primary source node to its
neighboring primary cells where we assume that there are $N$
secondary nodes including the designated secondary relay node in the neighboring cell along the primary data path that successfully decodes the packet and ready to relay. Since the density of the secondary nodes is larger than that of the primary nodes, the throughput per secondary S-D pair is less than that per primary S-D pair as shown later in \textit{Theorem 7}. As such, packet splitting is needed to ensure that there is no bottleneck effect in relaying primary packets through the secondary tier. In particular,
each secondary node relays $1/N$ portion of the primary packet to
the intermediate destination node in a multi-hop fashion\footnote{We assume that there exists a central entity to coordinate the transmissions of the $N$ packet segments such that each chosen secondary node relays a unique portion of the primary packet.}, and the
value of $N$ is set as
\begin{equation}\label{relaynumber}
    N=\Theta\left(\sqrt{\frac{m}{\log m}}\right).
\end{equation}
From \textit{Lemma 1} in Section IV, we can guarantee that there are
more than $N$ secondary nodes in each primary cell w.h.p. when $\beta\geq 2$. The specific transmission
scheme in the second subframe is the same as that in the first
subframe, where the subframe is divided into 64 time
slots and all the traffic is for primary packets.

\begin{figure}
\centering
\includegraphics[width=2.5in]{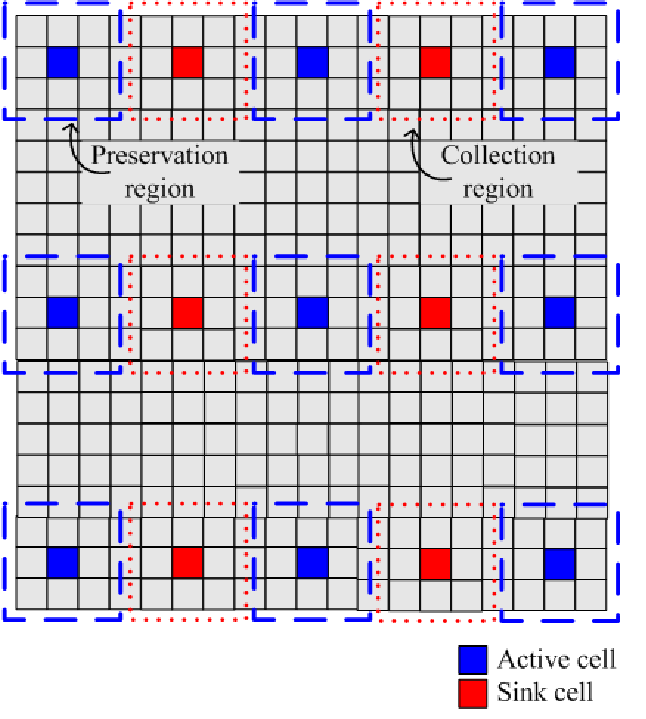}
\caption{Preservation regions and collection regions.}
\label{preservation}
\end{figure}

At the intermediate destination nodes, the received primary packet
segments are reassembled into the original primary packets. Then in
the third subframe, we use the following protocol to deliver the
packets to the primary destination nodes:
\begin{itemize}
\item Define a collection region as nine primary cells and a layer
of secondary cells around them, shown as the square with dotted
edges in Fig.~\ref{preservation}, where the collection region is
located between two preservation regions along the horizontal line
and they are not overlapped with each other.

\item Deliver the primary packets from the intermediate
destination nodes in the secondary tier to the corresponding
primary destination nodes in the sink cell, which is defined as
the center primary cell of the collection region. The primary
destination nodes in the sink cell take turns to receive data by
following a time-slotted pattern, where the corresponding intermediate
destination node in the collection region transmits by pretending
as a primary TX node. Given that the third subframe is of an equal
length to one primary slot, each primary destination node in the
sink cell can receive one primary packet from the corresponding
intermediate destination node.

\item At each transmission, the intermediate destination node
transmits with the same power as that for a primary node, i.e., $P
a_{p}^{\frac{\alpha}{2}}(n)$.
\end{itemize}
\vspace{0.25cm}
\noindent\textbf{Protocol for Mobile Secondary Tier}

Like in the scenario with static secondary nodes, we assume that the secondary nodes have the necessary cognitive
features to ``pretend'' as primary nodes such that they could be
chosen as the designated primary relay nodes within a particular
primary cell. Similar to the protocol for the static secondary tier, once a secondary node is chosen to be a designated relay node, it is required to keep listening instead of relaying primary packets when it jumps into an active primary cell. In this scenario, the primary packets are jointly relayed by the designated relay nodes and other secondary nodes in a special way, which is described next.

Divide the transmission time into TDMA frames, where the
secondary frame has the same length as that of one primary time slot as shown in~Fig.~\ref{frame}. To limit the interference to primary transmissions, we define
preservation regions in a similar way to that in the scenario with static secondary nodes. To faciliate the description of the secondary protocol, we define the \textit{separation threshold time} of random walk as~\cite{Aldous:Reversible}
\begin{equation}\label{separation time}
    \tau=\min\{t: s(t)\leq e^{-1}\}
\end{equation}
where $s(t)$ measures the separation from the stationary
distribution at time $t$, which is given by
\begin{eqnarray}
  \nonumber s(t) &=& \min\bigg\{s:p_{(x,y),(u,v)}(t)\geq
  (1-s)\pi_{(u,v)},
  \\
  &&~\textrm{for all}~x,y,u,v\in\{1,2,\cdots,1/\sqrt{S}\}\bigg\}
\end{eqnarray}
where $p_{(x,y),(u,v)}(t)$ denotes the probability that a
secondary node hits RW-cell $(u,v)$ at time $t$ starting from
RW-cell $(x,y)$ at time $0$, and $\pi_{(u,v)}=S$ is the probability of staying at RW-cell $(u,v)$ at the stationary state. We have
$\tau=\Theta(1/S)$~\cite{Aldous:Reversible}.

The secondary nodes perform one of the the following two operations
according to whether they are in the preservation regions or not:

\noindent i) If a secondary node is in a preservation region, it is not allowed to transmit packets. Instead, it receives the packets from the active primary transmitters and store them in the buffer for future deliveries. Each secondary node maintains $Q$ separate queues for each primary S-D
  pair. For the i.i.d. mobility model, we take $Q=1$, i.e., only one queue is needed for each primary
  S-D pair. For the RW model, $Q$ takes the value of $\tau$ given by~(\ref{separation time}). The packet received at time slot $t$ is
  considered to be `type $k$' and stored in the $k$th queue, if
  $\left\{\lfloor\frac{t}{64}\rfloor~\textrm{mod}~Q\right\}=k$, where $\lfloor x \rfloor$ denotes the flooring operation.

\noindent~ii) If a secondary node is not in a preservation region, it transmits the primary and secondary packets in the buffer. In order to guarantee successful deliveries for both primary and secondary packets, we evenly and randomly divide the secondary S-D pairs into two classes: Class I and Class II. Define a collection region in a similar way to that in the scenario with static secondary nodes. In the following, we
describe the operations of the secondary nodes of Class I based on
whether they are in the collection regions or not. The secondary
nodes of Class II perform a similar task over switched timing
relationships with the odd and even primary time slots.
\begin{itemize}
        \item If the secondary nodes are in the collection regions, they keep silent at the odd primary time slots and deliver the primary packets at the even primary time slots to the primary destination nodes in the sink cell, which is defined as the center primary cell of the collection region. In a particular primary time slot, the primary destination nodes in the sink cell take turns to receive packets following a time-slotted pattern. For a particular primary destination node at time $t$, we choose an arbitrary secondary node in the sink cell to send a request message to the destination node. The destination node replies with the desired SN, which will be heard by all secondary nodes within the nine primary cells of the collection region. These secondary nodes remove all outdated packets for the destination node, whose SNs are lower than the desired one. For the i.i.d. mobility model, if one of these secondary nodes has the packet with the desired SN and it is in the sink cell, it sends the packet to the destination node. For the RW model, if one of these secondary nodes has the desired packet in the $k$th queue with $k=\left\{\lfloor\frac{t}{64}\rfloor~\textrm{mod}~Q\right\}$ and it is in the sink cell, it sends the packet to the destination node. At each transmission, the secondary node transmits with the same power as that for a primary node, i.e., $Pa_{p}^{\frac{\alpha}{2}}(n)$.
        \item If the secondary nodes are not in the collection regions, they keep silent at the even primary time slots and transmit secondary packets at the odd primary time slots as follows. Divide the unit square into small-square secondary cells with size $a_s(m)=1/m$ and group every 64 secondary cells into a secondary cluster. The cells in each secondary cluster take turns to be active in a round-robin fashion. In a particular active secondary cell, we could use Scheme 2 in~\cite{Gamal:Optimal} to transmit secondary packets with power of $Pa_s^{\frac{\alpha}{2}}(m)$ within the secondary tier.
      \end{itemize}

\section{Throughput and Delay Analysis for the Primary Tier}

In the following, we first present the throughput and delay scaling laws for the primary tier in the scenario where the primary and secondary nodes are all static, and then discuss the scenario where the secondary nodes are mobile.

\subsection{The Scenario with Static Secondary Nodes}

We first give the throughput and delay scaling
laws for the primary tier, followed by the delay-throughput
tradeoff.

\vspace{0.25cm}
\noindent\textbf{Throughput Analysis}

In order to obtain the throughput scaling law, we first give the
following lemmas.
\begin{lemma}\label{lemma1}
The numbers of the primary nodes and secondary nodes in each
primary cell are $\Theta(na_p(n))$ and $\Theta(ma_p(n))$ w.h.p.,
respectively.
\end{lemma}

This is an existing result. The proof can be found in \cite{Yin:Scaling}.

\begin{lemma}\label{lemma2}
If the secondary nodes compete to be the designated relay nodes
for the primary tier by pretending as primary nodes, a randomly
selected designated relay node for the primary packet in each primary
cell is a secondary node w.h.p..
\end{lemma}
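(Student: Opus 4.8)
The plan is to reduce the claim to a simple density-ratio computation built on top of \emph{Lemma~\ref{lemma1}}. Since the secondary nodes ``pretend'' to be primary nodes, the designated relay node in a primary cell is drawn uniformly at random from the pooled set of all nodes residing in that cell---the real primary nodes together with the ``fake'' primary (i.e., secondary) nodes. Hence the key quantity is simply the fraction of that candidate pool contributed by the secondary tier, and the argument hinges on the secondary tier being overwhelmingly denser.

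First I would invoke \emph{Lemma~\ref{lemma1}}: w.h.p.\ a given primary cell contains $N_p=\Theta(na_p(n))$ primary nodes and $N_s=\Theta(ma_p(n))$ secondary nodes. Conditioned on this event, the probability that a uniformly chosen candidate in the cell is a genuine primary node is
\begin{equation}
\frac{N_p}{N_p+N_s}=\frac{\Theta(na_p(n))}{\Theta(na_p(n))+\Theta(ma_p(n))}=\Theta\!\left(\frac{n}{m}\right)=\Theta\!\left(n^{1-\beta}\right),
\end{equation}
where the last step uses $m=n^{\beta}$ and the factor $a_p(n)$ cancels. Since $\beta\geq 2$ we have $1-\beta\leq -1$, so this probability tends to $0$ and the complementary probability that the chosen relay is a secondary node tends to $1$, establishing the per-cell claim.

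To upgrade the statement so that it holds simultaneously in every primary cell, I would take a union bound over the $1/a_p(n)=O(n/\log n)$ cells. The per-cell probability that the relay is primary is $O(n^{-(\beta-1)})=O(n^{-1})$ for $\beta\geq 2$, so the expected number of cells whose relay is primary is $O(n/\log n)\cdot O(n^{-1})=O(1/\log n)\to 0$; Markov's inequality then yields the result w.h.p. The failure probability attached to \emph{Lemma~\ref{lemma1}} (a Chernoff-type tail for the Poisson counts) decays polynomially in $n$ per cell and is absorbed into the same union bound.

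I do not anticipate a genuine obstacle here: the entire content is that the secondary tier is a factor $m/n=n^{\beta-1}$ denser than the primary tier, so it dominates the candidate pool. The only point requiring mild care is the combination of two independent sources of randomness---the Poisson node placement (controlled by \emph{Lemma~\ref{lemma1}}) and the uniform selection of the relay---and ensuring that both tail probabilities survive the union bound over all primary cells, which the assumption $\beta\geq 2$ guarantees with room to spare.
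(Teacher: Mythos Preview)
Your proposal is correct and follows essentially the same density-ratio argument as the paper, which simply computes $\eta=\frac{\Theta(ma_p(n))}{\Theta(ma_p(n)+na_p(n))}\to 1$ from \emph{Lemma~\ref{lemma1}} and stops. The only difference is that you go a step further and union-bound over all $1/a_p(n)$ primary cells to obtain the simultaneous ``in each primary cell'' statement, which the paper does not do explicitly; this extra care is sound but not strictly needed for how the lemma is subsequently used.
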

\begin{proof}
Let $\eta$ denote the probability that a randomly
selected designated relay node for the primary packet in a particular primary
cell is a secondary node. We have $\eta=\frac{\Theta(ma_p(n))}{\Theta(ma_p(n)+na_p(n))}$ from \textit{Lemma 1}, which approaches one as $n\rightarrow\infty$. This completes the proof.
\end{proof}

\begin{lemma}\label{lemma3}
With the protocols given in Section III, an active
primary cell can support a constant data rate of $K_1$, where
$K_1>0$ independent of $n$ and $m$.
\end{lemma}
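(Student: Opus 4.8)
The plan is to show that the signal-to-interference-plus-noise ratio (SINR) at the receiver of an active primary link is bounded below by a positive constant independent of $n$ and $m$; since the achievable rate is $R_p = \log(1+\mathrm{SINR})$, this immediately yields a constant supportable rate $K_1 > 0$. This is a Gupta--Kumar-style SINR argument, but adapted to the two-tier setting, where the genuinely new ingredient is controlling the interference from the much denser secondary tier. I would begin with the received signal power. A primary TX transmits with power $Pa_p^{\alpha/2}(n)$ to a receiver in an adjacent cell, so the TX--RX distance is at most $c\sqrt{a_p(n)}$ for an absolute constant $c$ (a small multiple of the cell side length $\sqrt{a_p(n)}$). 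With the pathloss $g(r)=r^{-\alpha}$, the received signal power is at least $Pa_p^{\alpha/2}(n)\cdot\big(c\sqrt{a_p(n)}\big)^{-\alpha}=Pc^{-\alpha}$, a positive constant. The choice of the power exponent $\alpha/2$ is exactly what cancels the distance-dependent pathloss, leaving a scale-free signal power.

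Next I would bound the primary interference $I_p$. Because the primary cells are grouped into clusters of $N_c=64$ and activated in round-robin, the simultaneously active primary TXs form a sparse regular pattern whose minimum pairwise spacing is a constant multiple of $\sqrt{a_p(n)}$. Grouping interferers into concentric tiers around the receiver, tier $k$ contains $O(k)$ transmitters at distance at least $c_1 k\sqrt{a_p(n)}$, so that
\[
I_p \;\leq\; \sum_{k\geq 1} O(k)\, Pa_p^{\alpha/2}(n)\,\left(c_1 k\sqrt{a_p(n)}\right)^{-\alpha} \;=\; O\!\left(\sum_{k\geq 1} k^{1-\alpha}\right),
\]
which converges to a constant because $\alpha>2$. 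Hence $I_p=O(1)$.

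The main obstacle is bounding the secondary interference $I_{sp}$, since the secondary tier is much denser ($m=n^\beta$) and one must rule out the possibility that the sheer number of secondary transmitters overwhelms the primary receiver. Here the preservation region is essential: it guarantees that every active secondary TX lies at distance at least $c_2\sqrt{a_p(n)}$ from the primary receiver. Each secondary TX transmits with the smaller power $Pa_s^{\alpha/2}(m)$, and the active secondary transmitters have spatial density $O(1/a_s(m))$. Replacing the sum by an integral over the exterior of the preservation region gives
\[
I_{sp} \;\lesssim\; \frac{Pa_s^{\alpha/2}(m)}{a_s(m)}\int_{c_2\sqrt{a_p(n)}}^{\infty} r^{\,1-\alpha}\,dr \;=\; O\!\left(\left(\frac{a_s(m)}{a_p(n)}\right)^{\!(\alpha-2)/2}\right),
\]
where the integral converges since $\alpha>2$. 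Using $a_p(n)=\Theta(\log n/n)$ and $a_s(m)=\Theta(\log m/m)$ with $m=n^\beta$, the cell-size ratio satisfies $a_s(m)/a_p(n)=\Theta(n^{1-\beta})\to 0$ for $\beta\geq 2$, so in fact $I_{sp}=o(1)$; the combination of the reduced secondary power and the vanishing cell-size ratio is precisely what tames the denser interference field.

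Combining the three estimates,
\[
\mathrm{SINR} \;\geq\; \frac{Pc^{-\alpha}}{N_0+O(1)+o(1)} \;\geq\; K_0 > 0
\]
for some constant $K_0$ depending only on $P$, $c$, $N_0$, $\alpha$ and the (convergent) geometric sums, hence independent of $n$ and $m$. Therefore $R_p\geq \log(1+K_0)=:K_1>0$, which proves the claim. The only delicate points I expect to check carefully are the constants defining the preservation-region radius $c_2\sqrt{a_p(n)}$ (to ensure the lower limit of the integral is genuinely of order $\sqrt{a_p(n)}$) and the verification that one active secondary transmitter per active secondary cell suffices for the density estimate $O(1/a_s(m))$.
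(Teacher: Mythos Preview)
Your overall SINR strategy matches the paper's, but there is a genuine error in the secondary-interference bound, and one case is missing.

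First, the preservation region does \emph{not} give you a buffer of order $\sqrt{a_p(n)}$. By definition it consists of the $3\times 3$ block of primary cells centered at the active primary TX plus only one layer of \emph{secondary} cells around it. The primary RX lies in one of the eight outer primary cells of that block and may sit right at its outer edge; consequently the nearest active secondary transmitter can be as close as $\Theta(\sqrt{a_s(m)})$, not $c_2\sqrt{a_p(n)}$. If you redo your integral with lower limit $\Theta(\sqrt{a_s(m)})$, the $a_s$-dependence cancels exactly (the transmit power $Pa_s^{\alpha/2}(m)$ matches the distance scale), and you obtain $I_{sp}=O(1)$ rather than $o(1)$. That is still enough for the lemma, but your claim that ``the vanishing cell-size ratio tames the denser interference field'' is wrong: the ratio $a_s/a_p$ plays no role here, and the secondary interference is a genuine constant, not a vanishing term. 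The paper carries this out as the discrete tier sum $P\sum_{t\ge 1}8t(7t-6)^{-\alpha}$, which reflects precisely the $\sqrt{a_s}$-scale geometry.

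Second, you omit the third subframe (the paper's Case~II), in which secondary intermediate-destination nodes inside collection regions transmit to primary destinations with \emph{primary} power $Pa_p^{\alpha/2}(n)$. Those transmitters are not covered by your integral, since they use a different power level and spatial pattern (one per collection region, at distance at least $2\sqrt{a_p}$ from the RX in a preservation region). The paper bounds this contribution separately by $P\sum_{t\ge 1}2t(7t-5)^{-\alpha}$ and then observes that it is dominated by the Case~I constant. You flagged the preservation-region constant as ``delicate''; it is in fact where your argument goes astray.
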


The proof can be found in Appendix I.

\begin{lemma}\label{lemma4}
With the protocols given in Section III, the secondary
tier can deliver the primary packets to the intended primary
destination node at a constant data rate of $K_2$, where $K_2>0$
independent of $n$ and $m$.
\end{lemma}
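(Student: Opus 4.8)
The plan is to lower-bound the signal-to-interference-plus-noise ratio (SINR) at a primary destination node during the third subframe by a positive constant, which immediately yields a constant deliverable rate $K_2=\log(1+\mathrm{SINR}_{\min})$ through the rate expression~(\ref{prate}). This parallels the argument for \emph{Lemma~\ref{lemma3}}, so I would reuse the same interference-bounding machinery.

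First I would bound the received signal power. Since the sink cell is the center cell of the collection region and the intermediate destination node lies within the nine primary cells of that region, the TX--RX distance is at most a constant multiple of $\sqrt{a_p(n)}$. With transmit power $Pa_p^{\alpha/2}(n)$ and the pathloss model~(\ref{pathloss}), the received signal power is $Pa_p^{\alpha/2}(n)\cdot\Theta\left(a_p^{-\alpha/2}(n)\right)=\Theta(1)$, hence bounded below by a positive constant independent of $n$ and $m$.

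Next I would bound the aggregate interference. During the third subframe only two kinds of transmitters are active: the intermediate destination nodes in the \emph{other} collection regions, and the primary TXs in the primary cells that happen to be active in the corresponding primary time slot; both transmit with the same power $Pa_p^{\alpha/2}(n)$. The $64$-TDMA clustering of the primary cells, together with the non-overlapping placement of collection regions between preservation regions, guarantees that every simultaneous interferer is separated from the receiver by $\Omega\left(\sqrt{a_p(n)}\right)$. I would then group the interferers into concentric tiers indexed by $k$: tier $k$ contains $O(k)$ transmitters, each at distance $\Omega\left(k\sqrt{a_p(n)}\right)$, so each contributes interference $O\left(a_p^{\alpha/2}(n)\cdot\left(k\sqrt{a_p(n)}\right)^{-\alpha}\right)=O\left(k^{-\alpha}\right)$, and tier $k$ in total contributes $O(k)\cdot O\left(k^{-\alpha}\right)=O\left(k^{1-\alpha}\right)$. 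Summing gives a total interference $O\left(\sum_{k\geq 1}k^{1-\alpha}\right)$, which converges to a constant $I_{\max}$ because $\alpha>2$.

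Combining the two bounds, the SINR is at least $\Theta(1)/(N_0+I_{\max})$, so the deliverable rate is bounded below by a constant $K_2>0$ independent of $n$ and $m$. The main obstacle is the geometric bookkeeping in the interference step: I must verify that the combined schedule of primary TDMA clusters, collection regions, and preservation regions truly enforces the $\Omega\left(\sqrt{a_p(n)}\right)$ minimum separation and the linear-in-$k$ count of per-tier interferers, accounting for the primary TXs and the intermediate-node TXs \emph{simultaneously}, since only then does the series $\sum_{k}k^{1-\alpha}$ converge and the constant SINR floor follow.
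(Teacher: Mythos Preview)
Your proposal is correct and takes essentially the same approach as the paper: the paper's own proof simply states that the argument is ``similar to that for \emph{Lemma~\ref{lemma3}}'' and that the primary destination node sees two bounded interference terms---one from all active primary TXs and one from all active secondary TXs---each of which is upper-bounded by a constant via the same concentric-tier sum you describe. Your write-up is in fact more explicit than the paper's; the only minor refinement is that, per the protocol, the intermediate destination node is chosen in the \emph{same} primary cell as the final primary destination (not merely somewhere in the nine cells of the collection region), so the TX--RX distance is at most $\sqrt{2a_p(n)}$, which only tightens your signal-power lower bound.
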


The proof can be found in Appendix I.

Note that \textit{Lemmas 2-4} are new results for the supportive two-tier network setup. Based on \textit{Lemmas 1-4}, we have the following theorem.

\begin{theorem}\label{pthroughput}
With the protocols given in Section III, the primary tier can
achieve the following throughput per S-D pair and sum throughput
w.h.p. when $\beta\geq 2$:
\begin{equation}\label{pthroughput1}
    \lambda_p(n)=\Theta\left(\frac{1}{n a_p(n)}\right)
\end{equation}
and
\begin{equation}\label{pthroughput2}
    T_p(n)=\Theta\left(\frac{1}{a_p(n)}\right),
\end{equation}
where $a_p(n)\geq \sqrt 2\beta\log n/n$ and $a_p(n)=o(1)$.
\end{theorem}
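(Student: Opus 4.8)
The plan is to derive the per-node throughput by counting, for a typical S-D pair, how many packets it can inject per frame and over how many frames its packets must share the network's bottleneck resource. The key structural facts are already in place: each active primary cell supports a constant rate $K_1$ (\emph{Lemma~\ref{lemma3}}), the secondary relaying delivers primary packets at a constant rate $K_2$ (\emph{Lemma~\ref{lemma4}}), and the designated relay in each cell is w.h.p.\ a secondary node (\emph{Lemma~\ref{lemma2}}). So the throughput will be limited not by the raw link rate but by the spatial reuse (the $64$-TDMA clustering) together with the number of S-D data paths that funnel through the busiest cell.

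First I would fix the cell size $a_p(n)$ and recall from \emph{Lemma~\ref{lemma1}} that each primary cell contains $\Theta(na_p(n))$ primary nodes w.h.p.; with $a_p(n)\geq \sqrt 2\beta\log n/n$ this is $\Omega(\log n)$, so every cell is nonempty and the multi-hop routing along horizontal-then-vertical data paths is well defined. The crucial combinatorial step is to bound the maximum number of data paths crossing any single cell. Following the standard argument (as in~\cite{Gupta:Capacity} and the line-intersection counting used in~\cite{Jeon:Cognitive}\cite{Yin:Scaling}), the number of S-D lines passing through a given cell concentrates around its mean, which scales as the total number of pairs times the per-cell crossing probability; this yields a per-cell traffic load of $\Theta\!\left(n\,a_p(n)\cdot\sqrt{na_p(n)}\,/\,\text{(normalization)}\right)$-type expression that, after dividing the constant cell rate $K_1$ evenly among the competing flows, gives each S-D pair a share of $\Theta\!\left(1/(na_p(n))\right)$. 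I would present this as: each cell serves $\Theta(na_p(n))$ source transmissions plus the relayed flows, the $64$-TDMA schedule costs only a constant factor, and so the achievable per-node rate is $\lambda_p(n)=\Theta(1/(na_p(n)))$, giving the sum throughput $T_p(n)=n\lambda_p(n)=\Theta(1/a_p(n))$ since there are $\Theta(n)$ primary pairs.

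For the matching upper bound I would argue that no scheme under this protocol can do better: the number of simultaneously active cells is $\Theta(1/a_p(n))$ (one per cluster, a constant fraction of all $\Theta(1/a_p(n))$ cells), each carrying at most a constant rate by \emph{Lemma~\ref{lemma3}}, so the total information flow rate is $O(1/a_p(n))$; dividing by $\Theta(n)$ pairs bounds $\lambda_p(n)=O(1/(na_p(n)))$. I must also confirm that the secondary-tier relay stage is never the bottleneck — this is exactly what \emph{Lemma~\ref{lemma4}} guarantees, since the constant rate $K_2$ for delivering reassembled primary packets matches the order of $K_1$, so the three-subframe time-sharing (Fig.~\ref{frame}) only changes constants.

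\textbf{The hard part} will be making the per-cell path-count concentration rigorous while simultaneously accounting for the two distinct sources of load in each cell: the $\Theta(na_p(n))$ primary sources transmitting their own packets and the relayed traffic from passing data paths. I expect the cleanest route is to show both contributions are of the same order $\Theta(na_p(n))$ per cell w.h.p.\ (the path count through a cell along a fixed row/column is $\Theta(\sqrt{n a_p(n)}\cdot na_p(n))$ but spread over many packet slots), so that the constant cell rate $K_1$ splits into a per-flow rate of the claimed order; one must verify that the tail bounds from \emph{Lemma~\ref{lemma1}} are strong enough (a Chernoff bound on the Poisson counts) to take a union bound over all $\Theta(1/a_p(n))$ cells without destroying the w.h.p.\ guarantee, which is where the $\log n$ lower bound on $na_p(n)$ is essential.
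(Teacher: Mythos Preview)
Your proposal has a structural gap: you are running a Gupta--Kumar path-counting argument on the \emph{primary} cells, but that is precisely what the supportive secondary tier is designed to bypass. Under the protocol, the designated relay in each primary cell is a secondary node w.h.p.\ (\emph{Lemma~\ref{lemma2}}), and that secondary node stays silent during the active primary slot. Consequently, an active primary cell carries \emph{only} the source transmissions of the $\Theta(na_p(n))$ primary nodes inside it---no relayed traffic from crossing data paths ever loads the primary time slot. If you actually counted primary S-D paths through a cell you would get $\Theta(n\sqrt{a_p(n)})$, which dominates $\Theta(na_p(n))$ since $a_p(n)=o(1)$; your claim that ``both contributions are of the same order $\Theta(na_p(n))$'' is false, and the path-counting route would deliver the wrong answer $\Theta(1/(n\sqrt{a_p(n)}))$ rather than $\Theta(1/(na_p(n)))$.

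The paper's argument is instead: (i) the upper bound is immediate because $\Theta(na_p(n))$ source nodes share the constant cell rate $K=\min(K_1,K_2)$, giving at most $\Theta(1/(na_p(n)))$ per pair; (ii) achievability requires showing the \emph{secondary} tier can absorb and deliver this traffic. For (ii), \emph{Lemma~\ref{lemma4}} only handles the final hop into the primary destination; the nontrivial step you are missing is that each primary packet is split across $N=\Theta(\sqrt{m/\log m})$ parallel secondary paths, each supporting rate $\Theta(1/(m\sqrt{a_s(m)}))$, and one must \emph{choose} the secondary cell size $\sqrt{a_s(m)}=na_p(n)/\sqrt{m\log m}$ (which still satisfies $a_s(m)\geq 2\log m/m$ precisely when $\beta\geq 2$) so that the aggregate secondary relay rate matches $\Theta(1/(na_p(n)))$. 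That balancing of $a_s(m)$ against $a_p(n)$ is where $\beta\geq 2$ enters and is the real content of the achievability direction.
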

\begin{proof}
In this proof, we first derive an upper-bound of the throughput per S-D pair and then provide a lower-bound by using the proposed protocol in Section III. As shown later, these two bounds will give us the exact result given in~(\ref{pthroughput1}).

We first derive the upper-bound. From \textit{Lemmas 3} and \textit{4}, we know that the primary TX can pour
its packets into the secondary tier at a constant rate $K=\min(K_1,K_2)$. Since all the designated relay nodes are secondary nodes w.h.p. and they keep silent when the primary cells that they are located in become active, only primary source nodes transmit packets in each active primary cell. Given that the number of the primary source nodes in each primary cell is
of $\Theta(na_p(n))$ as shown in \textit{Lemma 1}, the upper-bound of the throughput per S-D pair is of
$\Theta(K/na_p(n))=\Theta(1/na_p(n))$.

For the lower-bound, we calculate the achievable throughput per S-D pair with the proposed protocols. In the proposed protocols, each primary source node pours all its packets
into the secondary tier w.h.p. (from \textit{Lemma 2}) by splitting data into
$N=\Theta\left(\sqrt{m/\log m}\right)$ secondary data paths, each of
them at a rate of $\Theta(\frac{1}{m\sqrt{a_s(m)}})$ given in~(\ref{sthroughput1s}). Set
$\sqrt{a_s(m)}=\frac{na_p(n)}{\sqrt{m\log m}}$, which satisfies $a_s(m)\geq 2\log m/m$. As such, each primary
source node achieves a throughput scaling law of $N\Theta(\frac{1}{m\sqrt{a_s(m)}})=\Theta\left(1/na_p(n)\right)$, which can be considered as a lower-bound of the throughput per S-D pair.

By combining the two bounds, (\ref{pthroughput1}) is proved. Since the total number of primary nodes in the unit square
is of $\Theta(n)$ w.h.p., we have $T_p(n)=\Theta(n\lambda_p(n))=\Theta\left(1/a_p(n)\right)$ w.h.p.. This completes the proof.
\end{proof}

Note that the condition of $\beta\geq 2$ is needed in the proof to guarantee that there are more than $N$ secondary nodes in each primary cell w.h.p.. By setting $a_p(n)=\sqrt 2\beta\log n/n$, the primary tier can achieve the
following throughput per S-D pair and sum throughput w.h.p.:
\begin{equation}\label{compare1}
    \lambda_p(n)=\Theta\left(\frac{1}{\log n}\right)
\end{equation}
and
\begin{equation}
    T_p(n)=\Theta\left(\frac{n}{\log n}\right).
\end{equation}

From (\ref{compare1}), we see that the per-node throughput scaling law of the primary tier can be improved from $\Theta\left(1/(\sqrt {n\log n})\right)$ as in the stand-alone network to $\Theta\left(1/\log n)\right)$ with the help of the secondary tier.

\vspace{0.25cm}
\noindent\textbf{Delay Analysis}

We now analyze the delay performance of the primary tier with the aid
of a static secondary tier. In the proposed protocols, we know that the
primary tier pours all the primary packets into the secondary tier
w.h.p. based on \textit{Lemma 2}. In order to analyze the delay of
the primary tier, we have to calculate the traveling time for the $N$
segments of a primary packet to reach the corresponding
intermediate destination node within the secondary tier. Since the
data paths for the $N$ segments are along the route and an active
secondary cell (outside all the preservation regions) transmits
one packet for each data path passing through it within a secondary
time slot, we can guarantee that the $N$ segments depart from the
$N$ nodes, move hop by hop along the data paths, and finally
reach the corresponding intermediate destination node in a
synchronized fashion. According to the definition of packet
delay, the $N$ segments
experience the same delay later given in~(\ref{sdelay}) within the
secondary tier, and all the segments arrive the intermediate destination node within one secondary slot.

Let $L_p$ and $L_s$ denote the durations of the primary and
secondary time slots, respectively. According to the proposed
protocols, we have
\begin{equation}
    L_p=64L_s.
\end{equation}
Since we split the secondary time frame into three fractions and
use one of them for the primary packet relaying, each primary
packet suffers from the following delay:
\begin{equation}\label{pdelay}
    D_p(n)=\frac{3}{64}D_s(m)+C=\Theta\left(\frac{1}{\sqrt{a_s(m)}}\right)
\end{equation}
where the secondary-tier delay $D_s(m)$ is later derived in~(\ref{sdelay}), $C$ denotes the average time for a primary packet to travel
from the primary source node to the $N$ secondary relay nodes plus that
from the intermediate destination node to the final destination
node, which is a constant. We see from (\ref{pdelay}) that the
delay of the primary tier is only determined by the size of the
secondary cell $a_s(m)$. In order to obtain a better delay
performance, we should make $a_s(m)$ as large as possible.
However, a larger $a_s(m)$ results in a decreased throughput per
S-D pair in the secondary tier and hence a decreased throughput
for the primary tier, for the primary traffic traverses over
the secondary tier w.h.p.. In Appendix III, we derive the
relationship between $a_p(n)$ and $a_s(m)$ in our supportive
two-tier setup as
\begin{equation}\label{cellsize}
    a_s(m)=\frac{n^2a_p^2(n)}{m\log m}
\end{equation}
where we have $a_s(m)\geq 2\log m/m$ when $a_p(n)\geq \sqrt 2\beta\log n/n$.

Substituting~(\ref{cellsize}) into (\ref{pdelay}), we have the following theorem.
\begin{theorem}
According to the proposed protocols in Section III, the primary
tier can achieve the following delay w.h.p. when $\beta\geq 2$.
\begin{equation}\label{pdelayfinal}
D_p(n)=\Theta\left(\frac{\sqrt{m\log
m}}{na_p(n)}\right)=\Theta\left(\frac{\sqrt{n^\beta\log
n}}{na_p(n)}\right).
\end{equation}
\end{theorem}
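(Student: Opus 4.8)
The plan is to obtain the stated delay scaling by directly combining the two relations already established immediately before the theorem: the delay decomposition (\ref{pdelay}) and the cell-size coupling (\ref{cellsize}). Since both of these have been derived, the theorem is essentially a corollary, and the real work reduces to a clean substitution together with a check that the asymptotic notation is being used legitimately rather than to any new estimate.

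First I would recall from (\ref{pdelay}) that $D_p(n)=\frac{3}{64}D_s(m)+C=\Theta\left(1/\sqrt{a_s(m)}\right)$, where the additive constant $C$ accounts for the fixed overhead of moving a packet from the primary source to the $N$ secondary relays and from the intermediate destination to the final primary destination. Before absorbing $C$ into the $\Theta(\cdot)$, I would verify that the leading term $1/\sqrt{a_s(m)}$ actually diverges so that $C$ is negligible: substituting the smallest admissible $a_p(n)=\sqrt{2}\beta\log n/n$ into (\ref{cellsize}) gives $a_s(m)=\Theta(\log n/n^\beta)\to 0$, hence $1/\sqrt{a_s(m)}\to\infty$, which justifies writing $D_p(n)=\Theta\left(1/\sqrt{a_s(m)}\right)$. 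I would then substitute (\ref{cellsize}), namely $a_s(m)=n^2a_p^2(n)/(m\log m)$, to obtain
\[
\frac{1}{\sqrt{a_s(m)}}=\sqrt{\frac{m\log m}{n^2a_p^2(n)}}=\frac{\sqrt{m\log m}}{na_p(n)},
\]
which is precisely the first equality in (\ref{pdelayfinal}). Finally I would invoke $m=n^\beta$ to simplify the logarithm, using $\log m=\beta\log n$ so that $\sqrt{m\log m}=\sqrt{\beta\,n^\beta\log n}=\sqrt{\beta}\,\sqrt{n^\beta\log n}=\Theta\left(\sqrt{n^\beta\log n}\right)$, which gives the second equality and completes the statement.

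The closest thing to an obstacle here is constraint bookkeeping rather than any hard analysis. I must confirm that the substituted $a_s(m)$ still respects the secondary-tier connectivity requirement $a_s(m)\geq 2\log m/m$; as noted right after (\ref{cellsize}), this holds exactly because $a_p(n)\geq\sqrt{2}\beta\log n/n$ (indeed it is tight at the minimal $a_p(n)$, where $a_s(m)=2\log m/m$). I would also flag that the regime $\beta\geq 2$ is what makes the splitting into $N=\Theta\left(\sqrt{m/\log m}\right)$ secondary relay segments feasible, via \emph{Lemma~\ref{lemma1}}, so that the delay expression (\ref{pdelay}) on which everything rests is itself valid. Given these consistency checks, the substitution is legitimate and the theorem follows.
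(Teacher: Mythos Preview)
Your proposal is correct and matches the paper's own argument exactly: the paper simply states ``Substituting~(\ref{cellsize}) into~(\ref{pdelay})'' and records the result, so the theorem is indeed a corollary of those two relations. Your added consistency checks (divergence of $1/\sqrt{a_s(m)}$, the constraint $a_s(m)\geq 2\log m/m$, and the role of $\beta\geq 2$ via \emph{Lemma~\ref{lemma1}}) go a bit beyond what the paper writes but are in the same spirit and do no harm.
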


\vspace{0.25cm}
\noindent\textbf{Delay-Throughput Tradeoff}

Combining the results in~(\ref{pthroughput1}) and
(\ref{pdelayfinal}), the delay-throughput tradeoff for the primary
tier is given by the following theorem.
\begin{theorem}\label{ptradeoff}
With the protocols given in Section III, the delay-throughput
tradeoff in the primary tier is given by
\begin{equation}\label{tradeoff}
   \hspace{-3 mm} D_p(n)=\Theta\left(\sqrt{n^\beta\log n}\lambda_p(n)\right)~\textrm{for}~\lambda_p(n)=O\left(\frac{1}{\log n}\right).
\end{equation}
\end{theorem}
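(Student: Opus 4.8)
The plan is to obtain the tradeoff as a direct corollary of \emph{Theorem~\ref{pthroughput}} and the delay characterization in~(\ref{pdelayfinal}), by eliminating the single free design parameter $a_p(n)$ that drives both quantities. The key observation is that the per-node throughput and the delay are governed by the same factor $1/(na_p(n))$: \emph{Theorem~\ref{pthroughput}} gives $\lambda_p(n)=\Theta(1/(na_p(n)))$ in~(\ref{pthroughput1}), while~(\ref{pdelayfinal}) gives $D_p(n)=\Theta(\sqrt{n^\beta\log n}/(na_p(n)))$. Since the prefactor $\sqrt{n^\beta\log n}$ is a deterministic scaling term independent of the choice of cell size, the two expressions differ only by this prefactor and otherwise share the common dependence on $1/(na_p(n))$.

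First I would solve the throughput relation for the shared factor, writing $1/(na_p(n))=\Theta(\lambda_p(n))$, and substitute it into the delay expression. Using the product rule for $\Theta(\cdot)$, this yields $D_p(n)=\Theta(\sqrt{n^\beta\log n})\cdot\Theta(\lambda_p(n))=\Theta(\sqrt{n^\beta\log n}\,\lambda_p(n))$, which is exactly the claimed tradeoff curve in~(\ref{tradeoff}); this step is purely algebraic. Next I would pin down the admissible range of $\lambda_p(n)$. The connectivity constraint of the primary protocol forces $a_p(n)\geq \sqrt{2}\beta\log n/n$, hence $na_p(n)\geq\sqrt{2}\beta\log n$, and therefore $\lambda_p(n)=\Theta(1/(na_p(n)))=O(1/\log n)$. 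As $a_p(n)$ is increased toward $o(1)$, $\lambda_p(n)$ decreases along the tradeoff curve, with the throughput maximized (equal to $\Theta(1/\log n)$) at the smallest admissible cell size; this simultaneously establishes both the curve and the constraint $\lambda_p(n)=O(1/\log n)$.

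I do not expect a substantive obstacle, since the real work — the throughput achievability in \emph{Theorem~\ref{pthroughput}} and the delay scaling in~(\ref{pdelayfinal}) — has already been carried out. The only point needing care is to confirm that $a_p(n)$ is a genuine free parameter over which both scaling laws hold \emph{simultaneously}, rather than being valid only at disjoint parameter choices. This is where~(\ref{cellsize}) is used: the secondary cell size that controls the delay is tied to the primary cell size through $a_s(m)=n^2a_p^2(n)/(m\log m)$, so a single parameter $a_p(n)$ consistently determines both the throughput and the delay, and the relation $a_s(m)\geq 2\log m/m$ holds throughout the feasible interval $a_p(n)\in[\sqrt{2}\beta\log n/n,\,o(1))$. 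Verifying this consistency makes the elimination of $a_p(n)$ legitimate and completes the argument.
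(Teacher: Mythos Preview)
Your proposal is correct and matches the paper's approach exactly: the paper simply states that the tradeoff follows by ``combining the results in~(\ref{pthroughput1}) and~(\ref{pdelayfinal}),'' which is precisely your elimination of the common factor $1/(na_p(n))$. Your additional care in verifying the admissible range via $a_p(n)\geq\sqrt{2}\beta\log n/n$ and the consistency through~(\ref{cellsize}) makes the argument more explicit than the paper's one-line justification, but the substance is the same.
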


In Fig.~\ref{tradeoff1}, we draw the delay-throughput tradeoff for the primary tier with the aid of static secondary nodes compared with the optimal result without the secondary tier as shown in~\cite{Gamal:Optimal}. In the figure, the line segments PR and PQ denote the delay-throughput tradeoffs for the primary tier and a stand-alone network, respectively, where the scales of the axes are in terms of the orders in $n$. Any delay-throughput pair in the two segments can be achieved by adjusting the size of the primary cell. We see that with the aid of the secondary nodes, the primary tier can achieve higher order of throughput scaling. However, in the regime of $\lambda (n)=O(1/\sqrt{n\log n})$, the delay scaling of the primary tier with the aid of the secondary tier is worse than that without the secondary tier, i.e., the derived delay-throughput tradeoff given in (\ref{tradeoff}) is strictly suboptimal in this scenario. In other words, the primary tier can has better throughput performance with sacrificing certain delay performance in this scenario.

\begin{figure}
\centering
\includegraphics[width=2.5in]{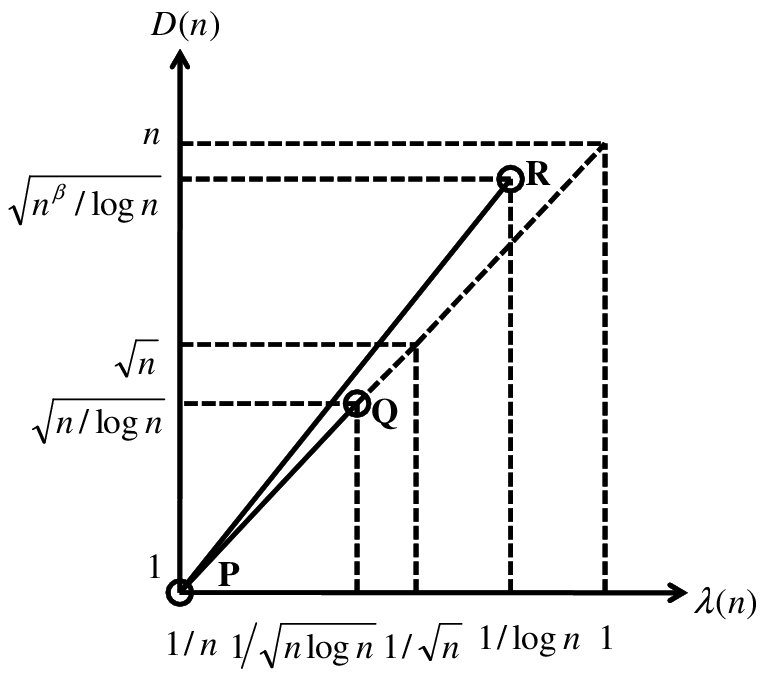}
\caption{Delay-throughput tradeoff for the primary tier with the aid of static secondary nodes.}
\label{tradeoff1}
\end{figure}

\subsection{The Scenario with Mobile Secondary Nodes}

\vspace{0.25cm}
\noindent\textbf{Throughput Analysis}

In order to obtain the throughput scaling law, we first give the
following lemmas.

\begin{lemma}\label{lemma5}
With the protocols given in Section III, an active primary cell
can support a constant data rate of $K_3$, where $K_3>0$
independent of $n$ and $m$.
\end{lemma}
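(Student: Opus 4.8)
The plan is to show that the signal-to-interference-plus-noise ratio (SINR) at the receiver served by an active primary cell is bounded below by a positive constant, so that the achievable rate $R_p(i)$ in (\ref{prate}) exceeds a constant $K_3$ independent of $n$ and $m$. I would first observe that, under the mobility model, the secondary node positions are frozen within each primary time slot, so the instantaneous interference geometry in any given slot is that of a static configuration; the mobile case therefore differs from \emph{Lemma~\ref{lemma3}} only in the queueing and relaying mechanics, not in the per-slot SINR computation. The argument then reduces to bounding three quantities: the received signal power, the aggregate primary interference $I_p(i)$, and the aggregate secondary interference $I_{sp}(i)$.

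First I would bound the signal power from below. Since the primary TX transmits with power $Pa_p^{\alpha/2}(n)$ and its intended receiver (a secondary relay in a neighboring primary cell, which is ready to receive since it sits in the preservation region) lies at distance $\Theta(\sqrt{a_p(n)})$, the pathloss model (\ref{pathloss}) yields a received signal power of $Pa_p^{\alpha/2}(n)\cdot\Theta\!\left(a_p^{-\alpha/2}(n)\right)=\Theta(P)$, a constant. Next I would bound $I_p(i)$ by the standard Gupta--Kumar ring argument: the $64$-TDMA clustering guarantees a minimum separation $\Theta(\sqrt{a_p(n)})$ between simultaneously active primary cells, so decomposing the plane into concentric rings of interferers around the receiver gives $\Theta(\ell)$ interferers in the $\ell$th ring, each contributing $\Theta(\ell^{-\alpha})$ once the power and distance scalings cancel; the sum $\sum_\ell\ell^{1-\alpha}$ converges to a constant for $\alpha>2$.

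The crux is bounding the secondary interference $I_{sp}(i)$, which is where the high density $m=n^\beta$ from (\ref{density}) and the mobility enter, and the naive worry is that the far denser secondary tier floods the primary receiver. I would resolve this using three facts: (a) each active secondary TX transmits at the reduced power $Pa_s^{\alpha/2}(m)$ with $a_s(m)=1/m$; (b) the preservation region---nine primary cells plus a surrounding layer of secondary cells centered on the active primary TX---creates a guard zone of radius $\Theta(\sqrt{a_p(n)})$ around the receiver inside which no secondary node transmits; and (c) the $64$-TDMA pattern on the secondary cells fixes a minimum separation $\Theta(\sqrt{a_s(m)})$ between simultaneously active secondary transmitters. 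Decomposing the region outside the guard zone into rings of width $\Theta(\sqrt{a_s(m)})$, the number of active secondary TXs in the $\ell$th ring is $\Theta(\ell)$ (w.h.p.\ by the node-count concentration of \emph{Lemma~\ref{lemma1}}), and each contributes $Pa_s^{\alpha/2}(m)\cdot\Theta\!\left((\ell\sqrt{a_s(m)})^{-\alpha}\right)=\Theta(\ell^{-\alpha})$, so the $m$-dependence cancels exactly. Hence $I_{sp}(i)=\Theta\!\left(\sum_{\ell\geq\ell_0}\ell^{1-\alpha}\right)$, which converges for $\alpha>2$ and is in fact $o(1)$, since the guard-zone radius forces $\ell_0=\Theta(\sqrt{a_p(n)m})\to\infty$.

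Combining these bounds gives $\mathrm{SINR}=\Theta(P)/(N_0+O(1)+o(1))=\Theta(1)$, so $R_p(i)\geq\log(1+\Theta(1))$, a positive constant $K_3$ independent of $n$ and $m$, as claimed. The main obstacle I anticipate is making step (c) rigorous in the mobile setting: I must verify that the per-ring node-count and separation guarantees used in the static analysis continue to hold w.h.p.\ over the random placement in each slot. This again follows from \emph{Lemma~\ref{lemma1}}, because the per-slot marginal distribution of the secondary nodes is uniform under both the i.i.d.\ and the random-walk models, so the counting estimates transfer verbatim.
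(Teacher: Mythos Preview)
Your reduction to a per-slot SINR bound and the ring-sum treatment of $I_p(i)$ and of the low-power secondary interferers are exactly what the paper does. However, your claim that ``the mobile case differs from \emph{Lemma~\ref{lemma3}} only in the queueing and relaying mechanics, not in the per-slot SINR computation'' is where the argument breaks. In the mobile protocol the secondary nodes are split into Class~I and Class~II with switched odd/even timing, so in \emph{every} primary slot there are two simultaneous sources of secondary interference at the primary RX: (i) secondary nodes outside collection regions transmitting their own packets with power $Pa_s^{\alpha/2}(m)$, which is the term you bound; and (ii) secondary nodes inside collection regions delivering primary packets to sink cells, and these transmit with the \emph{full primary power} $Pa_p^{\alpha/2}(n)$. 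Your proof omits term (ii) entirely. The paper handles it by observing that there is exactly one such Class~II transmitter per primary cluster (so $K_p$ of them in total), located on the same $8\sqrt{a_p}$ grid as the active primary cells, and then running a separate ring sum at the primary scale to obtain $I_{sp}^{II}(i)<P\sum_{t\geq 1}8t(7t-5)^{-\alpha}$. Without this step your interference accounting is incomplete, and the crucial point is that these interferers do not enjoy the $a_s^{\alpha/2}$ power reduction that makes your cancellation argument work.

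A secondary issue: your guard-zone radius of $\Theta(\sqrt{a_p(n)})$ around the RX is not a worst-case bound. The primary RX sits in one of the eight cells adjacent to the active TX and can lie on the outer boundary of the $3\times 3$ preservation block, so the nearest Class~I secondary interferer can be as close as $\sqrt{a_s(m)}$ (the width of the extra layer of secondary cells). This means $\ell_0=\Theta(1)$, not $\Theta(\sqrt{a_p(n)m})$, and your $o(1)$ conclusion for $I_{sp}(i)$ should be weakened to $O(1)$---which is what the paper gets and which still suffices for the lemma. This error is harmless to the final statement, but the missing Class~II term is a genuine gap.
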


The proof can be found in Appendix II.

\begin{lemma}\label{lemma6}
With the protocols given in Section III, the secondary tier can
deliver the primary packets to the intended primary destination
node in a sink cell at a constant data rate of $K_4$, where
$K_4>0$ independent of $n$ and $m$.
\end{lemma}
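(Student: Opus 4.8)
The plan is to reduce \emph{Lemma 6} to a signal-to-interference-plus-noise ratio (SINR) bound at the primary destination node inside the sink cell, mirroring the argument used for \emph{Lemma 4} in the static case. The relevant delivery is performed, at the even primary time slots, by a Class I secondary node located in the sink cell, which transmits with the primary-level power $Pa_p^{\alpha/2}(n)$ to a primary destination node in the same cell (a transmission governed by a rate expression of the form~(\ref{prate})). Since both nodes lie in a single primary cell of side $\sqrt{a_p(n)}$, their separation is at most $\Theta(\sqrt{a_p(n)})$, so the received signal power is at least $Pa_p^{\alpha/2}(n)\cdot(c\sqrt{a_p(n)})^{-\alpha}=Pc^{-\alpha}=\Theta(1)$, independent of $n$ and $m$. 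Thus the whole lemma hinges on showing that the aggregate interference at the sink-cell receiver is $O(1)$, since the noise power $N_0$ is constant.

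First I would bound the interference from the other concurrent deliveries, i.e., the Class I secondary nodes delivering primary packets in the \emph{other} sink cells during the same even slot. Because the collection regions are defined to be non-overlapping and, together with the round-robin cluster schedule, the simultaneously active sink cells lie on a coarse grid whose spacing is a fixed multiple of $\sqrt{a_p(n)}$, I can group the interfering sink cells into concentric layers around the victim receiver. The $k$th layer contains $\Theta(k)$ interferers, each at distance $\Omega(k\sqrt{a_p(n)})$, so each contributes at most $Pa_p^{\alpha/2}(n)\cdot(c'k\sqrt{a_p(n)})^{-\alpha}=\Theta(k^{-\alpha})$. Summing over layers gives $\sum_{k\geq1}\Theta(k)\cdot\Theta(k^{-\alpha})=\Theta\!\left(\sum_{k\geq1}k^{1-\alpha}\right)$, a convergent series because $\alpha>2$, hence $O(1)$.

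Next I would bound the interference from the active primary transmitters, for which the preservation regions play the same protective role as in the static analysis: every active primary TX is surrounded by a preservation region, and the collection region is placed between preservation regions so that the sink cell is kept at distance $\Omega(\sqrt{a_p(n)})$ from any active primary TX. The same layered summation over the grid of active primary cells again yields a convergent $\sum_k k^{1-\alpha}$ and therefore an $O(1)$ contribution. The remaining interferers are the concurrent Class II secondary \emph{data} transmissions outside the collection regions, which use the far smaller power $Pa_s^{\alpha/2}(m)$ with $a_s(m)=1/m$ and are reused on the fine secondary grid; a standard layered summation bounds their total contribution by $O(1)$ as well. Combining the three interference terms with the constant noise power makes the denominator of the rate expression $\Theta(1)$ while the numerator is $\Theta(1)$, so $\mathrm{SINR}=\Omega(1)$ and the achievable rate $\log(1+\mathrm{SINR})\geq K_4$ for some $K_4>0$ independent of $n$ and $m$.

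The main obstacle I anticipate is not the convergence of the interference series — routine once $\alpha>2$ — but justifying that the grid spacing of the \emph{active} sink cells and the guaranteed separation between a sink cell and any active primary TX are both $\Omega(\sqrt{a_p(n)})$ at every time slot, despite the secondary nodes being mobile. Because the nodes are stationary within each primary slot and move only between slots, the node positions are fixed at the instant of each transmission, so the separation guarantees follow from the geometry of the preservation/collection regions and the cluster TDMA schedule exactly as in the static case; I would state this snapshot argument explicitly to close the gap introduced by mobility.
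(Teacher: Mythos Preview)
Your proposal is correct and follows essentially the same approach as the paper: both identify the three interference sources at the primary destination in a sink cell---active primary TXs, the Class~I secondary deliverers in the other sink cells (transmitting at primary power), and the Class~II secondary data transmissions (at secondary power)---and bound each by a convergent layered sum as in \emph{Lemma~5}. The paper's own proof is only a one-paragraph sketch that points back to \emph{Lemma~5}; your write-up fills in the signal-power lower bound and the geometric separation details, and your snapshot observation about mobility is a harmless clarification, since the paper implicitly relies on the same fact.
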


The proof can be found in Appendix II.

Based on \textit{Lemmas 1-2} and \textit{Lemmas 5-6}, we have the following theorem.

\begin{theorem}
With the protocols given in Section III, the primary tier can
achieve the following throughput per S-D pair and sum throughput
w.h.p.:
\begin{equation}\label{pthroughput1m}
    \lambda_p(n)=\Theta\left(\frac{1}{na_p(n)}\right)
\end{equation}
and
\begin{equation}\label{pthroughput2m}
    T_p(n)=\Theta\left(\frac{1}{a_p(n)}\right),
\end{equation}
when $a_p(n)\geq \sqrt 2\beta\log n/n$ and $a_p(n)=o(1)$.
\end{theorem}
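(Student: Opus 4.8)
The plan is to follow the same template as the proof of \emph{Theorem 1}, but now replacing the static multi-hop relaying analysis by the mobility-based delivery mechanism. The two structural facts that drive the argument are \emph{Lemma 5} and \emph{Lemma 6}: the former guarantees that an active primary cell can feed its packets into the mobile secondary tier at a constant rate $K_3$, while the latter guarantees that the secondary carriers deliver those packets to the intended primary destination in the sink cell at a constant rate $K_4$. Combining them, each primary source node can pour its traffic into and out of the secondary tier at the constant rate $K=\min(K_3,K_4)$, independent of $n$ and $m$.

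First I would establish the upper bound. By \emph{Lemma 1}, each primary cell contains $\Theta(na_p(n))$ primary nodes w.h.p., and by the primary protocol these nodes share the active primary slot in a round-robin TDMA fashion. Hence no single source node can sustain more than a $\Theta(1/(na_p(n)))$ fraction of the constant per-cell injection rate $K_3$, giving $\lambda_p(n)=O(1/(na_p(n)))$, and multiplying by the $\Theta(n)$ source nodes gives the matching bound on $T_p(n)$.

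For achievability, I would invoke \emph{Lemmas 1--2} to note that each primary cell hosts $\Theta(ma_p(n))$ secondary nodes and that the role of receiving and buffering primary packets falls on secondary (cognitive) nodes w.h.p., so essentially all primary traffic is absorbed by the denser secondary tier at the source side. \emph{Lemma 5} then certifies that this absorption runs at rate $K_3$, and \emph{Lemma 6} certifies that the mobility-based collection mechanism (the even/odd-slot Class I / Class II scheduling together with the SN-handshake queue purging) delivers each buffered primary packet to its destination in the sink cell at rate $K_4$. Since every primary destination in the sink cell receives one packet per assigned slot, the delivery side does not throttle the source rate below $\Theta(K/(na_p(n)))$, matching the upper bound. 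This yields $\lambda_p(n)=\Theta(1/(na_p(n)))$ and, since there are $\Theta(n)$ primary nodes w.h.p., $T_p(n)=\Theta(n\lambda_p(n))=\Theta(1/a_p(n))$.

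The main obstacle I anticipate is confirming that the mobility-based delivery underlying \emph{Lemma 6} does not silently become the bottleneck. Unlike the static case, the rate $K_4$ hides all dependence on the mobility model: for the i.i.d.\ model a single queue ($Q=1$) suffices, whereas for the random-walk model one needs $Q=\tau=\Theta(1/S)$ parallel queues so that a carrier revisiting the sink cell holds a deliverable packet w.h.p. The delicate part will be verifying that these model-dependent bookkeeping costs, the class split into odd/even slots, and the outdated-packet purging via the handshake each cost only a constant factor, so that $K_4$ remains a positive constant independent of $n$ and $m$ and the resulting throughput scaling is insensitive to the choice of mobility model.
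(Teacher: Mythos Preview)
Your proposal is correct and follows essentially the same route as the paper: combine \emph{Lemmas 5--6} to get a constant end-to-end rate $K=\min(K_3,K_4)$, use \emph{Lemma 1} together with round-robin sharing of the active slot for the upper bound $\lambda_p(n)=O(1/(na_p(n)))$, and use \emph{Lemma 2} to argue that the designated relay is secondary w.h.p.\ so the primary slot is consumed only by source transmissions, making the bound achievable. Your closing worry about the mobility-dependent bookkeeping ($Q=1$ versus $Q=\tau$, the class split, the handshake purging) is unnecessary for this theorem: those costs are already absorbed into the proof of \emph{Lemma 6} and the separate delay analysis, so the throughput argument may legitimately treat $K_4$ as a black-box constant.
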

\begin{proof}
From~\textit{Lemmas~\ref{lemma5}} and \textit{\ref{lemma6}}, we
know that a primary TX can pour its packets into the secondary
tier at rate $K=\min(K_3,K_4)$ w.h.p.. Similar to the proof of \textit{Theorem 1}, there are $\Theta(na_p(n))$ of primary source nodes, which take turns to transmit packets in each active primary cell w.h.p..
Therefore, the upper-bound of the throughput per S-D pair is of
$\Theta\left(K/(na_p(n))\right)=\Theta\left(1/(na_p(n))\right)$
w.h.p.. Next, we show that with the proposed protocols, the above
upper-bound is achievable. In the proposed
protocols, from \textit{Lemma 2} we know that a randomly selected designated relay node
for the primary packet in each primary cell is a secondary node
w.h.p. from \textit{Lemma~\ref{lemma2}}. As such, when a primary
cell is active, the current primary time slot is just used for the
primary source nodes in the primary cell to transmit their own
packets w.h.p.. Therefore, the achievable throughput per S-D pair
is of
$\Theta\left(pK/(na_p(n))\right)=\Theta\left(1/(na_p(n))\right)$ and thus a achievable sum throughput of
$\Theta(1/a_p)$ for the primary tier w.h.p.. This completes the proof.
\end{proof}

By setting $a_p(n)=\sqrt 2\beta\log n/n$, the primary tier can achieve the
following throughput per S-D pair and sum throughput w.h.p.:
\begin{equation}\label{compare2}
    \lambda_p(n)=\Theta\left(\frac{1}{\log n}\right)
\end{equation}
and
\begin{equation}
    T_p(n)=\Theta\left(\frac{n}{\log n}\right).
\end{equation}

From (\ref{compare2}), we can draw the same conclusion as that in the scenario with static secondary nodes, i.e., the per-node throughput scaling law of the primary tier can be improved from $\Theta\left(1/(\sqrt {n\log n})\right)$ as in the stand-alone network to $\Theta\left(1/\log n)\right)$ with the help of the secondary tier.

\vspace{0.25cm}
\noindent\textbf{Delay Analysis}

Based on the proposed supportive protocols, we know that the delay for each primary packet has two
components: i) the hop delay, which is the transmission time for two
hops (from the primary source node to a secondary relay node
and from the secondary relay node to the primary destination node); ii)
the queueing delay, which is the time a packet spends in the
relay-queue at the secondary node until it is delivered to its
destination. The hop delay is two primary time slots, which can be
considered as a constant independent of $m$ and $n$. Next, we
quantify the primary-tier delay performance by focusing on the expected queueing delay at the relay based on the two
mobility models described in Section II.C.

\subsubsection{The i.i.d. Mobility Model}
We have the following theorem regarding the delay of the primary
tier.

\begin{theorem}\label{pdelaym1}
With the protocols given in Section III, the primary tier can
achieve the following delay w.h.p. when $\beta\geq 2$:
\begin{equation}\label{result1}
    D_p(n)=\Theta(1).
\end{equation}
\end{theorem}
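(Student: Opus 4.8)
The plan is to decompose the primary-packet delay into its two stated components --- the hop delay and the queueing delay --- and show each is $\Theta(1)$. The hop delay consists of the transmission from the primary source to a secondary relay plus the transmission from a secondary relay to the primary destination, i.e., exactly two primary time slots, which is a constant independent of $n$ and $m$. Since this already gives $D_p(n)=\Omega(1)$, the substance of the theorem lies in proving that the expected queueing delay at the relaying secondary nodes is $O(1)$.

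For the queueing delay, I would first count the carriers. When a primary source broadcasts a packet, every secondary node lying in its preservation region (nine primary cells plus a surrounding layer of secondary cells) decodes and buffers the packet; by \emph{Lemma~\ref{lemma1}} the number of such carriers is $\nu(n)=\Theta(m a_p(n))$, and since $Q=1$ under the i.i.d. model each carrier holds the single packet indexed by its SN. Under the i.i.d. mobility model the positions of these carriers are reshuffled uniformly and independently at every primary time slot, so at any later delivery slot each carrier lies in the destination's sink cell (a single primary cell of area $\Theta(a_p(n))$) independently with probability $\Theta(a_p(n))$.

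The key estimate is then the probability that at least one carrier occupies the sink cell during a given delivery slot. The expected number of carriers in the sink cell is $\nu(n)\cdot\Theta(a_p(n))=\Theta\!\left(m a_p^2(n)\right)$, and with $a_p(n)=\sqrt{2}\beta\log n/n$ this equals $\Theta\!\left(n^{\beta-2}(\log n)^2\right)$, which diverges for every $\beta\geq 2$. Using the independence of the carrier positions, the probability that \emph{no} carrier reaches the sink cell in a given slot is $\left(1-\Theta(a_p(n))\right)^{\nu(n)}\leq e^{-\Theta(m a_p^2(n))}\to 0$. Hence a successful delivery opportunity arises w.h.p. in the very next (even) slot, and summing the geometric tail shows that the expected number of slots a packet waits before delivery is $\Theta(1)$. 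Adding the constant hop delay gives $D_p(n)=\Theta(1)$.

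I expect the main obstacle to be the in-order/handshake bookkeeping rather than the hitting-probability estimate. Because the destination requests packets by increasing SN, I must verify that serving packets in order does not create head-of-line blocking that inflates the per-packet delay: each SN is carried by its own pool of $\Theta(m a_p(n))$ nodes, all holding the identical SN, so a request for SN $k$ is met as soon as one of its carriers enters the sink cell, independently of the other SNs; combined with the primary throughput $\lambda_p(n)=\Theta(1/\log n)$ established earlier, this guarantees the relay queues do not build up and the $\Theta(1)$ queueing delay is sustained. A secondary point to check is that the ``even-slot'' delivery schedule and the Class~I/Class~II splitting only scale the waiting time by a constant factor, thereby preserving the $\Theta(1)$ bound.
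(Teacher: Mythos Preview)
Your proposal is essentially correct and follows the same line as the paper: decompose into a constant hop delay plus a queueing delay, count $\Theta(ma_p(n))$ carriers per packet, and show the probability that at least one carrier lands in the sink cell tends to $1$ because $ma_p^2(n)\to\infty$ for $\beta\geq 2$. The paper formalizes the queueing step slightly differently---it treats the $\Theta(m)$ relays as a single virtual relay node, models arrivals as Bernoulli with rate $p$ (the source's transmit probability) and departures as asymptotically deterministic with rate $q\to 1$, and invokes the discrete Bernoulli/deterministic queue formula $E\{W\}=64(1-p)/(q-p)\to 64$; it also adds an explicit per-secondary-node queue-length bound of $n+1$ to certify stability. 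Your geometric hitting-time argument reaches the same $\Theta(1)$ conclusion, but the paper's queueing framing handles your ``head-of-line blocking'' concern more directly, since $p<q$ immediately gives a stable queue without needing to reason about independence across SNs.
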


\begin{proof}
According to the secondary protocol, within the secondary tier we have $\Theta(m)$ secondary nodes act
as relays for the primary tier, each of them with a separate queue for each of the
primary S-D pairs. Therefore, the queueing delay is the expected
delay at a given relay-queue. By symmetry, all such relay-queues
incur the same delay w.h.p.. For convenience, we fix
one primary S-D pair and consider the $\Theta(m)$ secondary nodes together
as a virtual relay node as shown in~Fig.~\ref{virtualnode} without identifying which secondary node is used as the relay. As such, we can calculate the expected
delay at a relay-queue by analyzing the expected delay at the
virtual relay node. Denote the selected primary source node,
the selected primary destination node, and the virtual relay node as S, D, and R,
respectively. To calculate the expected delay at node R, we first have to characterize the arrival and departure processes.
A packet arrives at R when a) the primary cell containing S is
active, and b) S transmits a packet. According to the primary
protocol in Section III, the primary cell containing S becomes
active every $64$ primary time slots. Therefore, we consider $64$
primary time slots as an observation period, and treat the arrival
process as a Bernoulli process with rate $p$ ($0<p<1$). Similarly, packet
departure occurs when a) D is in a sink cell, and b) at least one
of the relay nodes that have the desired packets for D is in the
sink cell containing D. Let $q$ detnote the probability that event
b) occurs, which can be expressed as
\begin{eqnarray}
  q &=& 1-(1-a_p(n))^M, \\\nonumber
   &\sim& 1-e^{-Ma_p(n)}, \\\nonumber
   &\rightarrow& 1,~\textrm{as}~n\rightarrow~\infty,~\textrm{for}~\beta\geq 2,
\end{eqnarray}
where $f\sim g$ means that $f$ and $g$ have the same limit when
$n\rightarrow\infty$, $M=\Theta(ma_p(n))$ denotes the number of
the secondary nodes that have desired packets for D in the sink
cell containing D and belong to Class I (Class II) if D is in a
sink cell at even (odd) time slots. As such, the departure process
is an asymptotically deterministic process with departure rate
$q=1$. Let $W_1$ denote the delay of the queue at the virtual
relay node based on the i.i.d. model. Thus, the queue at the
virtual relay node is an asymptotically Bernoulli/deterministic
queue, with the expected queueing delay given
by~\cite{Daduna:Queueing}
\begin{equation}
  E\{W_1\} = 64\frac{1-p}{q-p}\rightarrow 64,~\textrm{as}~n\rightarrow \infty,
\end{equation}
where $E\{\cdot\}$ denotes the expectation and the factor $64$ is
the length of one observation period. Note that the queueing
length of this asymptotically Bernoulli/deterministic queue is at
most one primary packet length w.h.p..

Next we need to verify that the relay-queue at each of the $\Theta(m)$
secondary nodes is stable over time. Note that based on the proposed protocol every secondary node removes
the outdated packets that have the SNs lower than the desired one
for D when it jumps into the sink cell containing D. Since the
queueing length at R can be upper-bounded by one, by considering the effect of storing outdated packets, the length of
the relay-queue at each secondary node can be upper-bounded by
\begin{equation}\label{queueing length}
    L=n+1
\end{equation}
where $n$ can be considered as an upper-bound for the inter-visit
time of the primary cell containing D, since
$(1-a_p(n))^n\rightarrow 0$ as $n\rightarrow \infty$. Thus, the
relay-queues at all secondary nodes are stable over time for each given $n$, which completes the
proof.
\end{proof}
\begin{figure}
\centering
\includegraphics[width=2.5in]{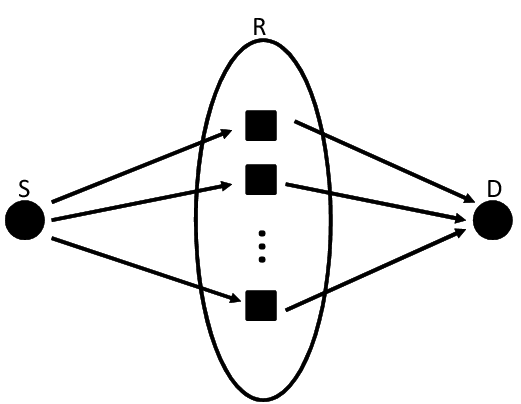}
\caption{Illustration of the virtual relay node R.}
\label{virtualnode}
\end{figure}

\subsubsection{The RW Mobility Model}

For the RW model, we have the following theorem regarding the
delay of the primary tier.
\begin{theorem}\label{pdelaym2}
With the protocols given in Section III, the primary tier can
achieve the following delay w.h.p. when $\beta\geq2$:
\begin{equation}\label{result2}
    D_p(n)=\Theta\left(\frac{1}{S}\right)=O\left(\frac{1}{a_p(n)}\right)
\end{equation}
where $S\geq a_p(n)$.
\end{theorem}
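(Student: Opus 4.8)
The plan is to mirror the queueing argument used for the i.i.d. model in \emph{Theorem~\ref{pdelaym1}}, the essential new ingredient being the control of the temporal correlation of the random walk through the $Q=\tau$ queue structure. As before, the total primary-packet delay splits into a constant hop delay (two primary time slots) and the queueing delay at the relays, so it suffices to bound the latter. I would again collapse the $\Theta(m)$ secondary relays carrying a given primary S-D pair's packets into a single virtual relay node $R$ and analyze the per-pair, per-queue delay; by symmetry all relay-queues incur the same delay w.h.p.

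First I would fix a primary S-D pair and a queue index $k$ and observe that both reception into queue $k$ and delivery from queue $k$ occur only at slots $t$ with $\lfloor t/64\rfloor\equiv k \pmod Q$, i.e., once every $Q$ observation periods (``queue-$k$ turns''). On this coarsened time scale the arrival process is Bernoulli with rate $p$, exactly as in the i.i.d. case, since the source injects a fresh type-$k$ packet with probability $p$ each time its cell is active. A departure at a queue-$k$ turn succeeds when at least one of the $M=\Theta(ma_p(n))$ relays holding the desired packet lies in the sink cell; call this probability $q$.

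The crux --- and the step I expect to be the main obstacle --- is to show $q\to1$ despite the correlation of a relay's position across slots. Here I would invoke the separation-threshold time $\tau$ of~(\ref{separation time}): because $Q=\tau=\Theta(1/S)$, a packet received at one queue-$k$ turn is only eligible for delivery $\Theta(\tau)$ random-walk steps later, which by the definition of $s(t)$ suffices for a relay's RW-cell distribution to come within $e^{-1}$ of the stationary distribution $\pi_{(u,v)}=S$. Consequently, at the delivery turn each relay is approximately uniform over the unit square and, since the relays move independently, the departure succeeds with probability $q=1-(1-a_p(n))^{M}\sim 1-e^{-Ma_p(n)}\to1$ for $\beta\ge2$ --- exactly the estimate used in \emph{Theorem~\ref{pdelaym1}}, now justified by mixing rather than by i.i.d. reshuffling. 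The delicate point is that the mixing bound must be applied to control the relays' joint positions across successive turns, so that distinct turns may be treated as asymptotically independent delivery trials.

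With $q\to1$ in hand, the queue-$k$ process is an asymptotically Bernoulli/deterministic queue on the queue-turn time scale, whose expected occupancy is $\frac{1-p}{q-p}=\Theta(1)$ turns and which is stable with length bounded w.h.p. by the same outdated-packet-removal argument as in \emph{Theorem~\ref{pdelaym1}}. Since consecutive queue-$k$ turns are $Q=\tau=\Theta(1/S)$ observation periods apart, converting from turns back to primary time slots multiplies this by $\Theta(\tau)$, giving a queueing delay of $\Theta(\tau)=\Theta(1/S)$; the matching lower bound $\Omega(1/S)$ is forced by the queue structure itself, since a packet admitted to queue $k$ is ineligible for service until the next queue-$k$ turn, $\Theta(\tau)$ slots away. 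Adding the constant hop delay yields $D_p(n)=\Theta(1/S)$, and because $S\ge a_p(n)$ we immediately obtain $D_p(n)=O(1/a_p(n))$, completing the argument.
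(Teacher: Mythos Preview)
Your proposal is correct and follows essentially the same route as the paper: collapse to a virtual relay, analyze the $k$th of the $Q=\tau$ queues on the $64\tau$-slot time scale, use the separation-threshold bound to force the departure probability $q\to1$, and read off $E\{W_2\}\sim64\tau=\Theta(1/S)$. The only cosmetic difference is that the paper writes the per-relay hit probability explicitly as $q_0\cdot(1-e^{-1})S$ with $q_0=a_p(n)/S$ (first reach the correct RW-cell, then land in the sink cell within it), giving $q\geq 1-(1-(1-e^{-1})a_p(n))^M$ rather than your $1-(1-a_p(n))^M$; the extra $(1-e^{-1})$ factor is immaterial for the conclusion $q\to1$.
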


\begin{proof}
Like the proof in the i.i.d. mobility case, we fix a primary S-D
pair and consider the $\Theta(m)$ secondary nodes together as a virtual
relay node. Denote the selected primary source node, the selected primary destination
node, and the virtual relay node as S, D, and R, respectively.
Based on the proposed secondary protocol in Section III, each
secondary node maintains $Q=\tau$ queues for each primary S-D pair.
Equivalently, R also maintains $Q$ queues for each primary S-D pair where each queue is a concatenated one from $\Theta(m)$ small ones,
and the packet that arrives at time $t$ is stored in the $k$th queue,
where
$k=\left\{\lfloor\frac{t}{64}\rfloor~\textrm{mod}~\tau\right\}$.
By symmetry, all such queues incur the same expected delay.
Without loss of generality, we analyze the expected delay of the
$k$th queue by characterizing its arrival and departure processes.
A packet that arrives at time $t$ enters the $k$th queue when a)
the primary cell containing S is active, b) S transmits a packet,
and c) $\left\{\lfloor\frac{t}{64}\rfloor~\textrm{mod}
~\tau\right\}=k$. Consider $64\tau$ primary time slots as an
observation period. The arrival process is a Bernoulli process
with arrival rate $p$. Similarly, a packet departure occurs at
time $t$ when a) D is in a sink cell, b) at least one of the relay
nodes that have the desired packets for D is in the sink cell
containing D, and c)
$\left\{\lfloor\frac{t}{64}\rfloor~\textrm{mod} ~\tau\right\}=k$.
Let $q$ denote the probability that event b) occurs during one
observation period, which can be expressed as
\begin{eqnarray}
  q &=& 1-\left(1-\prod_{i\in\mathcal{I}}{q_0p_{(x_i,y_i)(x_d,y_d)}(t_d)}\right), \\\nonumber
   &\geq& 1-(1-q_0(1-e^{-1})S)^M,\\\nonumber
   &\sim& 1-e^{-q_0(1-e^{-1})SM},\\\nonumber
   &\rightarrow& 1,~\textrm{as}~n\rightarrow~\infty,~\textrm{for}~\beta\geq 2,
\end{eqnarray}
where $\mathcal{I}$ denotes the set of the secondary nodes that
have the desired packets for D and belong to Class I (Class II) if D is
in a sink cell at even (odd) time slots; $(x_i,y_i)$ represents
the index of the RW-cell, in which the $i$th secondary node in
$\mathcal{I}$ is located when S sends the desired packet;
$(x_d,y_d)$ is the index of the RW-cell, in which D is located;
$t_d$ stands for the difference between the arrival time and the
departure time for the desired packet, which can be lower-bounded
by $64(\tau-1)$; and $q_0$ denotes the probability that a secondary
node is within the sink cell containing $D$ when it moves into
RW-cell $(x_d,y_d)$, which is given by $q_0=a_p(n)/S$. As such, the departure process is an asymptotically deterministic
process with departure rate $q=1$. Let $W_2$ denote the delay of
the queue at node R based on the RW model. Thus,
the queue at node R is an asymptotically
Bernoulli/deterministic queue, with the queueing delay given by
\begin{equation}
  E\{W_2\} = 64\tau\frac{1-p}{q-p}\sim64\tau=\Theta(\frac{1}{S}),
\end{equation}
where the factor $64\tau$ is the length of one observation period.
Since $S\geq a_p(n)$, we have $E\{W_2\}=O\left(1/a_p(n)\right)$.

Using the similar argument as in the i.i.d. case, we can
upper-bound the length of the $k$th relay-queue at any secondary
node by~(\ref{queueing length}) for any $k$. Thus, the
relay-queues at all secondary nodes are stable, which completes
the proof.
\end{proof}

In TABLE I, we compare our delay scaling results for the primary tier in the two-tier network setup with the optimal delay scaling for a stand-alone primary network (without the secondary tier), in which the throughput scalings for all scenarios are fixed to be $\Theta(1/\sqrt{n\log n})$. From TABLE I, we see that the primary tier achieves worse delay scaling in the presence of the static secondary nodes compared with the one without the secondary nodes. However, in the scenario with mobile secondary nodes, the delay scaling of the primary tier is significantly improved.

\begin{table*}
\caption{The delay scaling laws of the primary tier.}
\begin{center}
    \begin{tabular}{|c|c|c|c|c|}
    \hline
      \multirow {2}{*} & Stand-Alone  Network & \multicolumn{3}{|c|} {Supportive Two-Tier Network} \\ \hhline{|~|~|-|-|-|}
      &  \multirow{2}{*} {(Without Secondary Nodes)} &  \multirow{2}{*}{Static Secondary Nodes} & Mobile Secondary Nodes &  Mobile Secondary Nodes\\

      & & &  (i.i.d. Case) & (RW Case)\\ \hline
      Delay Scaling Law& $\Theta(\sqrt{n/\log n})$& $\Theta\left(\sqrt{n^{\beta-1}}\right)$& $\Theta(1)$& $O(\sqrt{n/\log n})$\\
      \hline
    \end{tabular}
\end{center}
\end{table*}

\vspace{0.25cm}
\noindent\textbf{Delay-Throughput Tradeoff}

For the RW model, we have the following delay-throughput tradeoff
for the primary tier by combining (\ref{pthroughput1}) and
(\ref{result2}).
\begin{equation}
    D_p(n)=O\left(n\lambda_p(n)\right),~~\textrm{for}~\lambda_p(n)=O\left(\frac{1}{\log n}\right).
\end{equation}

\begin{figure}
\centering
\includegraphics[width=2.5in]{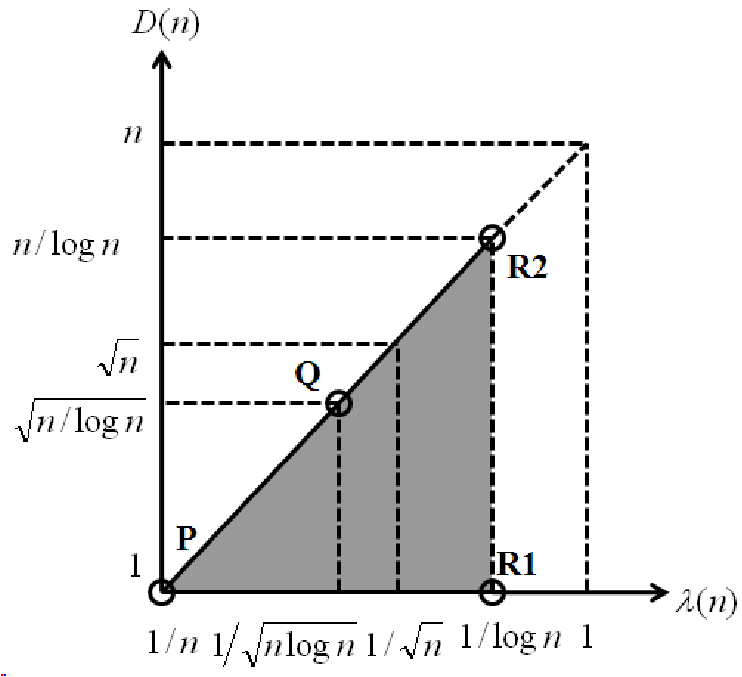}
\caption{Delay-throughput tradeoff for the primary tier with the aid of mobile secondary nodes.}
\label{tradeoff2}
\end{figure}

In Fig.~\ref{tradeoff2}, we draw the delay-throughput tradeoff for the primary tier with the aid of the mobile secondary nodes compared to the result for a static stand-alone network in~\cite{Gamal:Optimal}. In the figure, the line segments PR1 and PQ denote the delay-throughput tradeoffs for the primary tier with and without the secondary tier, respectively, where the segment PR1 is obtained based on the i.i.d. node mobility model. Any delay-throughput pair in the two segments can be achieved by adjusting the size of the primary cell. For the RW mobility model, any delay-throughput pair in the shade area can be achieved by adjusting the size of the primary cell and the RW step size. We see that with the aid of the mobile secondary nodes, both throughput and delay scaling laws of the primary tier can be improved in this scenario.

We see that for both mobility models, the delay-throughput tradeoffs for the primary tier
with the aid of the mobile secondary nodes are better than the optimal
delay-throughput tradeoff given in~\cite{Gamal:Optimal} for a
static stand-alone network. Particularly, the obtained delay-throughput tradeoff for the i.i.d. mobility model is essentially optimal for the supportive two-tier network setup, since the achievable constant delay scaling law is also the lower bound for any given throughput scaling on the order of $O(1/\log n)$. Note that the above throughput and delay analysis is based on the assumption $\beta\geq 2$, and we leave the case with $1<\beta<2$ in our future work.

\section{Throughput and Delay Analysis for the Secondary Tier}
\subsection{The Scenario with Static Secondary Nodes}

\vspace{0.25cm}
\noindent\textbf{Throughput Analysis}

In this section, we discuss the delay and throughput scaling laws
for the secondary tier. According to the protocol for the
secondary tier, we split the time frame into three equal-length
fractions and use one of them for the secondary packet
transmissions. Since the above time-sharing strategy only incurs a
constant penalty (i.e., 1/3) on the achievable throughput and
delay within the secondary tier, the throughput and delay scaling
laws are the same as those given in~\cite{Yin:Scaling}, which are
summarized by the following theorems.
\begin{theorem}
With the secondary protocol defined in Section III, the secondary
tier can achieve the following throughput per S-D pair and sum
throughput w.h.p.:
\begin{equation}\label{sthroughput1s}
    \lambda_s(m)=\Theta\left(\frac{1}{m\sqrt{a_s(m)}}\right)
\end{equation}
and
\begin{equation}\label{sthroughput2s}
    T_s(m)=\Theta\left(\frac{1}{\sqrt{a_s(m)}}\right),
\end{equation}
where $a_s(m)\geq 2\log m/m$ and the specific value of $a_s(m)$ is
determined by $a_p(n)$ as shown in Appendix III.
\end{theorem}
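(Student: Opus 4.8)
The plan is to reduce the claim to the classical multi-hop scaling analysis of~\cite{Gupta:Capacity} as adapted to the cognitive setting in~\cite{Yin:Scaling}, using the observation already made in the text that the three-way time-sharing of the secondary frame (the $1/3$ factor) and the $64$-cell clustering (the $1/64$ factor) are constants that leave the scaling order unchanged. Thus the entire argument reduces to analyzing the \emph{first subframe} and establishing two facts: (a) an active secondary cell sustains a constant aggregate data rate $\Theta(1)$ independent of $m$; and (b) the number of secondary S-D data paths routed through any single secondary cell is $\Theta\left(m\sqrt{a_s(m)}\right)$ w.h.p. Given these, the per-pair throughput follows by splitting the cell rate in (a) among the routes in (b), and the sum throughput follows by multiplying by the $\Theta(m)$ secondary S-D pairs.

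For fact (a), I would first invoke the counting argument of \emph{Lemma~\ref{lemma1}} applied to secondary cells: since $a_s(m)\geq 2\log m/m$, a Chernoff bound for the PPP shows every secondary cell holds $\Theta\left(ma_s(m)\right)$ nodes w.h.p., so a relay always exists and a hop reaches an adjacent cell. The $64$-cell TDMA guarantees that the nearest simultaneously active secondary TX lies $\Omega\left(\sqrt{a_s(m)}\right)$ away; with transmit power $Pa_s^{\frac{\alpha}{2}}(m)$ the useful received power over the $O\left(\sqrt{a_s(m)}\right)$ hop is $\Theta(1)$, while the aggregate interference $I_s(j)$ in~(\ref{srate}) is a sum over concentric rings of active cells that converges to a constant for $\alpha>2$. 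Hence the SINR is bounded below by a positive constant and each active cell delivers rate $\Theta(1)$.

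For fact (b), I would use the horizontal-then-vertical (Manhattan) routing: a random S-D pair induces a path of length $\Theta(1)$ that traverses $\Theta\left(1/\sqrt{a_s(m)}\right)$ cells, so summing over the $\Theta(m)$ pairs yields $\Theta\left(m/\sqrt{a_s(m)}\right)$ total cell-crossings spread over $\Theta\left(1/a_s(m)\right)$ cells, i.e., a mean of $\Theta\left(m\sqrt{a_s(m)}\right)$ routes per cell. The number of routes through a fixed cell is a sum of $\Theta(m)$ independent indicators, so a Chernoff bound followed by a union bound over all $\Theta\left(1/a_s(m)\right)$ cells upgrades this mean to a uniform w.h.p. bound, since the threshold $a_s(m)\geq 2\log m/m$ makes the mean $\Theta\left(m\sqrt{a_s(m)}\right)$ dominate the union-bound exponent $\Theta(\log m)$.

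Combining the two facts, every route through the bottleneck cell receives rate at least $\Theta(1)/\Theta\left(m\sqrt{a_s(m)}\right)$ after absorbing the constant $1/3$ and $1/64$ penalties, giving the achievability $\lambda_s(m)=\Omega\left(1/(m\sqrt{a_s(m)})\right)$; the matching converse holds because the bottleneck cell carries $\Theta\left(m\sqrt{a_s(m)}\right)$ routes at total rate $O(1)$, so no route can exceed $O\left(1/(m\sqrt{a_s(m)})\right)$. This yields $\lambda_s(m)=\Theta\left(1/(m\sqrt{a_s(m)})\right)$, and multiplying by the $\Theta(m)$ S-D pairs gives $T_s(m)=\Theta\left(1/\sqrt{a_s(m)}\right)$. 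I expect the main obstacle to be the \emph{uniform} congestion bound in fact (b): the per-cell mean is elementary, but simultaneously controlling the worst-case cell over all $\Theta\left(1/a_s(m)\right)$ cells — which is precisely what makes the rate split achievable rather than merely true on average — is where the $a_s(m)=\Omega(\log m/m)$ threshold is genuinely used.
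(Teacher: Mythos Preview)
Your approach matches the paper's: both reduce to the stand-alone cognitive analysis of~\cite{Yin:Scaling} after absorbing the constant $1/3$ time-sharing and $1/64$ TDMA penalties, and the paper in fact gives no further argument beyond that citation. Your sketch of facts (a) and (b) is the standard Gupta--Kumar skeleton and is sound.

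That said, your expansion of fact (a) glosses over precisely the two ingredients that constitute the ``adaptation to the cognitive setting'' you invoke from~\cite{Yin:Scaling}. First, the SINR expression~(\ref{srate}) contains $I_{ps}(j)$, the interference from active primary TXs, which you never bound; one needs that a secondary RX outside all preservation regions is at least $\Theta(\sqrt{a_p(n)})$ from every active primary TX, so that with primary power $Pa_p^{\alpha/2}(n)$ this term is also $O(1)$. Second, the preservation regions silence a nonzero fraction of secondary cells in every secondary slot; to preserve the $\Theta(1)$ per-cell rate on average (and hence the route-level throughput), you need the observation that the total area covered by preservation regions at any instant is $\Theta(1/a_p(n))\cdot\Theta(a_p(n))=\Theta(1)$ strictly less than one, so only a constant fraction of secondary cells are ever blocked. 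Neither omission is fatal since both are handled in~\cite{Yin:Scaling}, but they are the substantive content of the cognitive adaptation rather than something your ring-sum argument on $I_s(j)$ alone delivers.
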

\vspace{0.25cm}
\noindent\textbf{Delay Analysis}

\begin{theorem}
With the secondary protocol defined in Section III, the packet
delay is given by
\begin{equation}\label{sdelay}
D_s(m)=\Theta\left(\frac{1}{\sqrt{a_s(m)}}\right).
\end{equation}
\end{theorem}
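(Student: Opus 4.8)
The plan is to reduce the delay to a hop count. Since $D_s(m)$ is measured in secondary time slots and, under the proposed multi-hop protocol, a packet advances one secondary cell per scheduling opportunity, the delay is governed by the number of cells a source--destination data path crosses, provided each hop costs only a constant number of slots. I would therefore establish two facts: (i) a typical data path consists of $H=\Theta(1/\sqrt{a_s(m)})$ hops, and (ii) each hop incurs $\Theta(1)$ delay. Fact (i) already yields the lower bound $D_s(m)=\Omega(1/\sqrt{a_s(m)})$, since traversing $H$ cells needs at least $H$ active slots; fact (ii) supplies the matching upper bound.

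For (i), recall that a secondary data path follows a horizontal segment and then a vertical segment joining a uniformly chosen source and destination over the unit square, so the expected displacement in each coordinate is $\Theta(1)$ and stays bounded away from $0$ and from its maximum w.h.p. As each secondary cell has side length $\sqrt{a_s(m)}$, a path of length $\Theta(1)$ meets $\Theta(1/\sqrt{a_s(m)})$ cells, giving $H=\Theta(1/\sqrt{a_s(m)})$ w.h.p.

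For (ii), I would analyze the relay queue at each cell on the path. Under the protocol every secondary cell is active once per $64$-slot TDMA frame, and when active it forwards exactly one packet for each data path passing through it. The number of paths through any cell concentrates at $\Theta(m\sqrt{a_s(m)})$ w.h.p. by the standard line-counting argument for a PPP of intensity $m$, and this is exactly the load the active-slot schedule clears at one packet per path per frame. Matching the per-path arrival rate (one packet per frame from the upstream cell) to this service rate keeps every relay queue stable with $O(1)$ backlog, so a packet waits only a constant number of frames per hop. Hence $D_s(m)=\Theta(H)=\Theta(1/\sqrt{a_s(m)})$. The three-way split of the secondary frame and the extra layer of preservation/collection cells rescale time and space by constant factors only, so the scaling matches the stand-alone multi-hop delay of~\cite{Yin:Scaling}, and the fluid model lets me ignore packet-quantization effects.

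The main obstacle will be (ii): certifying that no relay queue builds an unbounded backlog as $m\to\infty$. This hinges on the simultaneous w.h.p. concentration of the route count over all $1/a_s(m)$ cells---a union bound combined with a Chernoff estimate for the PPP---together with the stability of the resulting per-path Bernoulli/deterministic queues. Once the route count is pinned at $\Theta(m\sqrt{a_s(m)})$ uniformly over cells, the one-packet-per-path-per-frame schedule guarantees constant per-hop delay, and the hop-count estimate from (i) closes the argument.
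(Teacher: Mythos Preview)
Your proposal is correct and follows the standard multi-hop delay argument: count $\Theta(1/\sqrt{a_s(m)})$ hops along the axis-aligned data path and show each hop costs $\Theta(1)$ secondary slots via the 64-TDMA schedule, uniform route-count concentration, and the fluid model. The paper itself does not prove this theorem in-line; it simply observes that the three-subframe split and the preservation/collection regions impose only constant-factor penalties and then defers the entire proof to~\cite{Yin:Scaling}, whose underlying argument is precisely the hop-count-plus-constant-per-hop reasoning you outline.
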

\vspace{0.25cm}
\noindent\textbf{Delay-Throughput Tradeoff}

Combining the results in~(\ref{sthroughput1s}) and (\ref{sdelay}),
the delay-throughput tradeoff for the secondary tier is given by
the following theorem.
\begin{theorem}
With the secondary protocol defined in Section III, the
delay-throughput tradeoff is
\begin{equation}
D_s(m)=\Theta(m\lambda_s(m)),~\textrm{for}~\lambda_s(m)=O\left(\frac{1}{\sqrt{m\log
m}}\right).
\end{equation}

\end{theorem}
For detailed proofs of the above theorems, please refer
to~\cite{Yin:Scaling}.

\subsection{The Scenario with Mobile Secondary Nodes}

When a secondary RX receives its own packets, it suffers from two interference terms from all active primary
TXs and all active secondary TXs. We can use a similar method as in the proof of \textit{Lemma 5} to prove that each of the two interference terms can be upper-bounded by a constant independent
of $m$ and $n$. Thus, the asymptotic results for a stand-alone network in~\cite{Michael:Capacity},~\cite{Gamal:Optimal} hold in this scenario. In the following, we summarize these results for completeness.

\vspace{0.25cm}
\noindent\textbf{Throughput Analysis}

We have the following theorem regarding the throughput scaling law for the secondary tier.
\begin{theorem}
With the protocols given in Section III, the secondary tier can
achieve the following throughput per S-D pair and sum throughput
w.h.p.:
\begin{equation}\label{sthroughput1m}
    \lambda_s(m)=\Theta(1)
\end{equation}
and
\begin{equation}\label{sthroughput2m}
    T_s(m)=\Theta(m).
\end{equation}
\end{theorem}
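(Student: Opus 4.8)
The plan is to show that, under both mobility models, the mobile secondary tier reduces to a stand-alone mobile network, for which the classical two-hop relaying scheme achieves a constant per-node throughput. First I would dispose of the matching upper bound, which is immediate: since the bandwidth is normalized to unity and any secondary link operates at rate $R_s(j)=\log(1+\textrm{SINR})$ with received signal power at most a constant while the noise floor is at least $N_0>0$, the per-node rate is bounded above by a constant, giving $\lambda_s(m)=O(1)$ and hence $T_s(m)=O(m)$ over the $\Theta(m)$ secondary S-D pairs. The substantive work is then the achievability $\lambda_s(m)=\Omega(1)$.

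The key enabling step is the interference bound. Following the method used in the proof of \emph{Lemma~\ref{lemma5}}, I would show that the aggregate interference at any secondary RX, from both the active primary TXs (transmitting at power $Pa_p^{\alpha/2}(n)$) and the active secondary TXs (transmitting at power $Pa_s^{\alpha/2}(m)$), is upper-bounded by a constant independent of $m$ and $n$. This rests on three facts: the cell-based $64$-TDMA guarantees that simultaneously active TXs of each type are separated by at least a constant number of cells; the transmit powers are scaled to the respective cell sizes so that each active link delivers $\Theta(1)$ received signal power at its intended RX one cell away; and the pathloss exponent $\alpha>2$ makes the contributions from successive rings of active interfering cells a convergent series. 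Consequently the SINR of every scheduled secondary link is bounded below by a constant, so each active link supports a constant rate.

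Next I would carry out the achievability using Scheme~2 of~\cite{Gamal:Optimal}, adapted to the secondary partition with $a_s(m)=1/m$. Each cell then contains $\Theta(1)$ secondary nodes w.h.p. (by the PPP concentration in \emph{Lemma~\ref{lemma1}}), and the $64$-TDMA schedule makes every cell active a constant fraction $1/64$ of the time, so each node obtains $\Theta(1)$ one-cell-hop transmission opportunities at constant rate. For the i.i.d. model, at every slot the nodes are reshuffled uniformly, so for a fixed S-D pair the probability that its source shares a cell with a potential relay, or that a relay holding its packet shares the destination's cell, is $\Theta(1)$; two-hop delivery therefore proceeds at a constant rate per pair. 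For the RW model the same conclusion follows once the walk has mixed, since the stationary distribution is uniform over the $1/S$ RW-cells and $\tau=\Theta(1/S)$, so the long-run meeting rates are again $\Theta(1)$. Finally, a secondary node serves its own traffic only on the odd or even primary slots according to its Class~I/Class~II label, so the time-sharing with its primary-relaying duty costs only a constant factor and does not affect the scaling; combining with the interference bound yields $\lambda_s(m)=\Omega(1)$, and with the upper bound, $\lambda_s(m)=\Theta(1)$ and $T_s(m)=\Theta(m)$.

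The hardest part will be the interference bound in this mixed-priority mobile setting: unlike a stand-alone network, a secondary RX is simultaneously exposed to two interferer populations transmitting at two different power scales, and the preservation/collection-region constraints couple the secondary schedule to the primary one. Verifying that, even after this coupling, the active interferers of each type remain sufficiently separated for both interference sums to converge to constants is the crux; once that is established, the reduction to the stand-alone mobile throughput results of~\cite{Michael:Capacity}\cite{Gamal:Optimal} is routine.
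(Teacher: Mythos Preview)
Your proposal is correct and follows essentially the same route as the paper: the paper's entire argument for this theorem is to observe that a secondary RX sees two interference terms (from active primary TXs and active secondary TXs), invoke the method of \emph{Lemma~\ref{lemma5}} to bound each by a constant, and then declare that the stand-alone mobile-network results of~\cite{Michael:Capacity}\cite{Gamal:Optimal} carry over. You supply more detail than the paper does---an explicit $O(1)$ upper bound, the Scheme~2 mechanics for both mobility models, and the constant-factor loss from the Class~I/Class~II time-sharing---but the decomposition and the key lemma are identical.
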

\vspace{0.25cm}
\noindent\textbf{Delay Analysis}

Next, we provide the delay scaling laws of the secondary tier for
the two mobility models as discussed in Section II.C.

\begin{theorem}
With the protocols given in Section III, the secondary tier can
achieve the following delay w.h.p. based on the i.i.d. mobility
model:
\begin{equation}\label{sdelay1m}
    D_s(m)=\Theta(m).
\end{equation}
\end{theorem}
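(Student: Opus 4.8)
The plan is to reduce the statement to a stand-alone $m$-node mobile network and then analyze the relay holding time directly. First I would invoke the interference bound stated just before the theorem: since the aggregate interference seen by a secondary receiver from both the primary and the secondary transmitters is bounded by a constant independent of $m$ and $n$ (by the same argument as in the proof of \emph{Lemma \ref{lemma5}}), every scheduled secondary link sustains a constant rate. Consequently the secondary tier, running Scheme~2 of \cite{Gamal:Optimal} over secondary cells of size $a_s(m)=1/m$, behaves asymptotically as a stand-alone i.i.d.-mobile network of $m$ nodes, so its delay scaling is inherited from \cite{Michael:Capacity}\cite{Gamal:Optimal}. The remaining task is to verify the $\Theta(m)$ scaling within the present cell/TDMA structure.

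Next I would decompose the end-to-end delay of a tagged secondary packet into the two hops of the relaying scheme. In the first hop the source hands the packet to an arbitrary relay located in its own (or a neighboring) active secondary cell; since a constant fraction of cells are active in each slot and each cell holds $\Theta(ma_s(m))=\Theta(1)$ nodes, the first-hop delay is $\Theta(1)$. The delay is therefore dominated by the second hop, in which the relay holds the packet and can deliver it only in a slot where (a) its cell is active and it is selected, and (b) the destination lies within delivery range. Under i.i.d. mobility the destination position is re-randomized uniformly and independently each slot, so $p(\textrm{destination within range of the relay})=\Theta(a_s(m))=\Theta(1/m)$, with the TDMA activation and relay-selection rules contributing only $\Theta(1)$ factors. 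Because positions are independent across slots, the number of slots until delivery is stochastically dominated by a geometric random variable with success probability $\Theta(1/m)$, whose mean is $\Theta(m)$; this yields the upper bound $D_s(m)=O(m)$.

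For the matching lower bound I would use that Scheme~2 keeps exactly one copy of each packet (no redundancy): the packet can only be delivered by the single relay carrying it, and that relay meets the destination with probability at most $\Theta(1/m)$ per slot, so the expected second-hop delay is $\Omega(m)$. Combined with the $\Theta(1)$ first-hop delay this gives $D_s(m)=\Theta(m)$.

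The main obstacle I anticipate is not the meeting-probability computation but the bookkeeping that keeps the constant factors honest. I must confirm that the per-slot scheduling constraints---the 64-cell TDMA clustering, the Class~I/Class~II split that interleaves serving primary and secondary traffic across odd and even slots, and the relay-selection rule---rescale the delay by only a constant factor, and, crucially, that the relay queues remain stable at the throughput $\lambda_s(m)=\Theta(1)$ so that the holding-time analysis is not confounded by unbounded queueing. Establishing this stability, in the spirit of the virtual-relay argument used for the primary tier in the proof of \emph{Theorem \ref{pdelaym1}}, is the part that requires the most care.
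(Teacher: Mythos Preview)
Your proposal is correct and follows the same approach as the paper: bound the interference by a constant (via the argument of \emph{Lemma \ref{lemma5}}) so that the secondary tier behaves as a stand-alone i.i.d.-mobile network, then inherit the $\Theta(m)$ delay from \cite{Michael:Capacity}\cite{Gamal:Optimal}. The paper is in fact even terser than you are---it does not spell out the geometric meeting-time calculation or the queue-stability check at all, but simply states that ``the asymptotic results for a stand-alone network in~\cite{Michael:Capacity}~\cite{Gamal:Optimal} hold'' and lists the theorem ``for completeness''; your two-hop decomposition and the bookkeeping you flag are exactly the content of those references, so filling them in is fine but not required here.
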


\begin{theorem}
With the protocols given in Section III, the secondary tier can
achieve the following delay w.h.p. based on the RW model:
\begin{equation}\label{sdelay2m}
    D_s(m)=\Theta\left(m^2S\log\frac{1}{S}\right).
\end{equation}
\end{theorem}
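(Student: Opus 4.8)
The plan is to prove this by reducing the mobile secondary tier, operating on its \emph{own} traffic, to a stand-alone random-walk mobile network, and then invoking (or reproducing) the corresponding delay scaling law from~\cite{Gamal:Optimal}. The conceptual core is the observation already used above: when a secondary receiver decodes its own packets it sees only the two interference terms $I_s(j)$ and $I_{ps}(j)$, and by the same argument as in \emph{Lemma \ref{lemma5}} both are bounded by constants independent of $m$ and $n$ (the primary transmitters are shielded by the preservation regions, and the simultaneously active secondary transmitters are thinned by the $64$-TDMA clustering). Hence every scheduled secondary link supports a constant rate, exactly as in a stand-alone network. I would then account for the protocol overhead: the split between odd/even primary slots, the Class~I/Class~II partition of the secondary S-D pairs, and the silencing inside preservation/collection regions each cost only a constant factor, so that neither the $\Theta(1)$ per-node throughput of the previous theorem nor the delay scaling is altered in order.

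Given the reduction, the delay is that of the two-hop relay scheme of~\cite{Gamal:Optimal} under the random-walk model. First I would isolate the dominant term: the hop delay is a constant, so $D_s(m)$ is governed by the holding time, i.e. the number of primary slots a relay carries a tagged packet before it is co-located with, and scheduled to deliver to, the destination. Modeling the relay and the destination as two independent walks on the torus of $1/S$ RW-cells, the expected inter-meeting (hitting) time is $\Theta\!\left(\frac{1}{S}\log\frac{1}{S}\right)$, the $\log\frac{1}{S}$ factor being the signature of two-dimensional recurrence. I would then fold in the overhead created by the $\Theta(mS)$ secondary nodes that share each RW-cell: co-location at transmission-cell resolution and the per-cell scheduling/queueing constraints needed to sustain $\Theta(1)$ throughput each contribute a factor of $\Theta(mS)$, so that
\begin{equation}
D_s(m)=\Theta\!\left((mS)^2\cdot\frac{1}{S}\log\frac{1}{S}\right)=\Theta\!\left(m^2S\log\frac{1}{S}\right).
\end{equation}
As a sanity check, at the finest admissible resolution $S=1/m$ (one node per RW-cell) this collapses to $\Theta(m\log m)$, the standard random-walk figure of~\cite{Gamal:Optimal}.

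The main obstacle will be the random-walk timing bookkeeping rather than the reduction. Specifically, I expect the delicate parts to be (i) proving matching upper and lower bounds on the hitting time on the $1/S$-cell torus, including the $\log\frac{1}{S}$ term and uniformly over starting cells, so that a $\Theta(\cdot)$ (not merely $O(\cdot)$) statement is justified; (ii) separating cleanly the two $\Theta(mS)$ overhead factors and showing they do not interact to change the exponent; and (iii) verifying, as in the i.i.d. proof of \emph{Theorem \ref{pdelaym1}}, that the relay queues remain stable for each finite $n$ so that the expected delay is well defined. Once these are in place, combining them with the interference reduction of the first paragraph yields the claim; the analogous $\Theta(m)$ result for the i.i.d. model then follows by replacing the hitting time with the single-step reshuffling time.
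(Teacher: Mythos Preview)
Your proposal is correct and takes essentially the same approach as the paper: both reduce the secondary tier's own traffic to a stand-alone random-walk mobile network by bounding $I_s(j)$ and $I_{ps}(j)$ via the \emph{Lemma~\ref{lemma5}}-style argument, absorb the preservation/collection/Class~I--II overhead as constant factors, and then appeal to~\cite{Gamal:Optimal}. In fact the paper does not derive~\eqref{sdelay2m} at all---it states it as a ``generalized result'' of~\cite{Gamal:Optimal} and only checks consistency at $S=1/m$---so your hitting-time sketch (the $(1/S)\log(1/S)$ torus meeting time combined with the $\Theta(mS)$ intra-RW-cell resolution and scheduling factors) already supplies more detail than the paper's own treatment.
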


Note that~(\ref{sdelay2m}) is a generalized result for $S\geq 1/m$.
When $S=1/m$, the delay $D_s(m)=\Theta(m\log m)$ is the same as
that in~\cite{Gamal:Optimal}.

\section{Conclusion}

In this paper, we studied the throughput and delay scaling laws
for a supportive two-tier network, where the secondary
tier is willing to relay packets for the primary tier.
When the secondary tier has a much higher density, the primary
tier can achieve a better throughput scaling law compared to non-interactive overlaid
networks. The delay scaling law for the primary tier can also be improved when then the secondary nodes are mobile. Meanwhile, the secondary tier can still achieve the same delay and throughput tradeoff as in a stand-alone network.

\begin{center}
    {\bf APPENDIX I}\\
    {\bf Proofs of \textit{Lemmas 3} and \textit{4}}
\end{center}

\begin{proof}[Proof of \textit{Lemma 3}]
Assume that at a given moment, there are $K_p$ active primary
cells. The rate of the $i$th active primary cell is given by
\begin{equation}\label{shannon rate}
    R_p(i)=\frac{1}{64}\log\left(1+\frac{P_p(i)g\left(||X_{p,tx}-X_{p,rx}||\right)}{N_0+I_p(i)+I_{sp}(i)}\right)
\end{equation}
where $P_p(i)=Pa_{p}^{\frac{\alpha}{2}}(n)$ and $\frac{1}{64}$ denotes the rate loss due to the 64-TDMA
transmission of primary cells. In the surrounding of the $i$th
primary cell, there are 8 primary interferers with a distance of
at least $6\sqrt{a_p}$ and 16 primary interferers with a distance
of at least $13\sqrt{a_p}$, and so on. As such, the $I_p(i)$ is upper-bounded by
\begin{eqnarray}\label{primary interf}
  I_p(i) &=& \sum_{k=1,k\neq i}^{K_p}P_p g(\left||X_{p,tx}(k)-X_{p,rx}(i)||\right) \\\nonumber
   &<& P\sum_{t=1}^{\infty}8t(7t-1)^{-\alpha}\triangleq A.
\end{eqnarray}
Next, we discuss the interference $I_{sp}(i)$ from secondary
transmitting interferers to the $i$th primary RX. We consider the
following two case:
\begin{description}
  \item[Case I]: The secondary tier transmits either the secondary packets to the next hop or the primacy packets to the next secondary relay nodes, i.e., in the first or secondary subframes.
  \item[Case II]: The secondary tier delivers the data packets to the primary destination nodes, i.e., in the third secondary subframe.
\end{description}
In Case I, assume that there are $K_s$ active secondary cells, which means that the number of the active secondary TXs is also $K_s$. Since a minimum distance $\sqrt{a_s}$ can be guaranteed
from all secondary transmitting interferers to the primary RXs in
the preservation regions, $I_{sp}(i)$ is upper-bounded by
\begin{eqnarray}\label{secondary interf1}
  I_{sp}^{I}(i) &=& \sum_{k=1,k\neq i}^{K_s}P_s g(\left||X_{s,tx}(k)-X_{p,rx}(i)||\right)\\\nonumber
   &<& P\sum_{t=1}^{\infty}8t(7t-6)^{-\alpha}\triangleq B.
\end{eqnarray}

In Case II, there are $K_p$ collection regions and thus $K_p$ active secondary TXs. In the surrounding of the $i$th
primary cell, there are 2 secondary interferers with a distance of
at least $2\sqrt{a_p}$ and 4 secondary interferers with a distance
of at least $9\sqrt{a_p}$, and so on. Then, $I_{sp}(i)$ is upper-bounded by
\begin{eqnarray}\label{secondary interf2}
  I_{sp}^{II}(i) &=& \sum_{k=1,k\neq i}^{K_p}P_p g(\left||X_{s,tx}(k)-X_{p,rx}(i)||\right) \\\nonumber
  &<& P\sum_{t=1}^{\infty}2t(7t-5)^{-\alpha}\triangleq C.
\end{eqnarray}
Given $B>A$ and $B>C$, we have
\begin{equation}\label{bounded rate}
    R_p(i)>\frac{1}{64}\log\left(1+\frac{P(\sqrt 5)^{-\alpha}}{N_0+2P\sum_{t=1}^\infty 8t(7t-6)^{-\alpha}}\right).
\end{equation}
Since $\sum_{t=1}^\infty 8t(7t-6)^{-\alpha}$ converges to a
constant for $\alpha>2$, there exists a constant $K_1>0$ such that
$R_p(i)>K_1$. This completes the proof.
\end{proof}

Note that from (\ref{shannon rate}), the rate of the i-th primary cell actually decreases due to the extra interference from the secondary tier compared with the case without the secondary tier. However, the scaling law of the rate does not change (which is still a constant scaling).

\begin{proof}[Proof of \textit{Lemma 4}]
The proof is similar to that for \textit{Lemma 3}. When a primary RX
receives packets from its surrounding secondary nodes, it suffers
from two interference terms from all active primary TXs and all active
secondary TXs, either of which can be upper-bounded by a constant
independent of $n$ and $m$. Thus there is a constant rate $K_2$, at which
the secondary tier can deliver packets to the intended primary destination node.
\end{proof}

\begin{center}
    {\bf APPENDIX II}\\
    {\bf Proofs of \textit{Lemmas 5} and \textit{6}}
\end{center}

\begin{proof}[Proof of \textit{Lemma 5}]
Assume that at a given moment, there are $K_p$ active primary
cells. The supported rate of the $i$th active primary cell is given by
\begin{equation}
    R_p(i)=\frac{1}{64}\log\left(1+\frac{P_p(i)g\left(||X_{p,tx}-X_{p,rx}||\right)}{N_0+I_p(i)+I_{sp}(i)}\right)
\end{equation}
where $\frac{1}{64}$ denotes the rate loss due to the 64-TDMA
transmission of primary cells. In the surrounding of the $i$th
primary cell, there are 8 primary interferers with a distance of
at least $6\sqrt{a_p}$ and 16 primary interferers with a distance
of at least $13\sqrt{a_p}$, and so on. As such, the $I_p(i)$ is
upper-bounded by
\begin{eqnarray}
  I_p(i) &=& \sum_{k=1,k\neq i}^{K_p}P_p g(\left||X_{p,tx}(k)-X_{p,rx}(i)||\right) \\\nonumber
   &<& P\sum_{t=1}^{\infty}8t(7t-1)^{-\alpha}\triangleq A.
\end{eqnarray}
Next, we discuss the interference $I_{sp}(i)$ from secondary
transmitting interferers to the $i$th primary RX. According to the
proposed secondary protocol, the secondary nodes are divided into
two classes: Class I and Class II, which operate over the switched
timing relationships with the odd and the even time slots. Without
the loss of generality, we consider the interference $I_{sp}(i)$
from secondary transmitting interferers to the $i$th primary RX at
the odd time slots. Assume that there are $K_s$ active secondary
cells, which means that the number of the active secondary TXs of Class
I is $K_s$. Since a minimum distance $\sqrt{a_s}$ can be
guaranteed from all secondary transmitting interferers of Class I to the
primary RXs in the preservation regions, the interference from the
active secondary TXs of Class I, $I_{sp}^I(i)$, is upper-bounded by
\begin{eqnarray}\label{secondary interf11}
  I_{sp}^{I}(i) &=& \sum_{k=1,k\neq i}^{K_s}P_s g(\left||X_{s,tx}(k)-X_{p,rx}(i)||\right)\\\nonumber
   &<& P\sum_{t=1}^{\infty}8t(7t-6)^{-\alpha}\triangleq B.
\end{eqnarray}
Furthermore, there are $K_p$ collection regions, which means that the number
of the active secondary TXs of Class II is $K_p$. Since
a minimum distance $2\sqrt{a_p}$ can be guaranteed from all
secondary transmitting interferes of Class II to the primary RXs in the
preservation regions, the interference from the active secondary
TXs of Class II, $I_{sp}^{II}(i)$, is upper-bounded by
\begin{eqnarray}
  I_{sp}^{II}(i) &=& \sum_{k=1,k\neq i}^{K_p}P_p g(\left||X_{p,tx}(k)-X_{p,rx}(i)||\right) \\\nonumber
   &<& P\sum_{t=1}^{\infty}8t(7t-5)^{-\alpha}\triangleq C.
\end{eqnarray}

Given $B>A$ and $B>C$,
we have
\begin{eqnarray}
  R_p(i) &=& \frac{1}{64}\log\left(1+\frac{P_p(i)g\left(||X_{p,tx}-X_{p,rx}||\right)}{N_0+I_p(i)+I_{sp}^{I}(i)+I_{sp}^{II}(i)}\right) \\\nonumber
   &>& \frac{1}{64}\log\left(1+\frac{P(\sqrt 5)^{-\alpha}}{N_0+3P\sum_{t=1}^\infty 8t(7t-6)^{-\alpha}}\right).
\end{eqnarray}
Since $\sum_{t=1}^\infty 8t(7t-6)^{-\alpha}$ converges to a
constant for $\alpha>2$, there exists a constant $K_3>0$ such that
$R_p(i)>K_3$. This completes the proof.
\end{proof}

\begin{proof}[Proof of \textit{Lemma 6}]
The proof is similar to that for \textit{Lemma 5}. When a primary RX
receives packets from its surrounding secondary nodes, it suffers
from three interference terms from all active primary TXs, all active
secondary TXs of Class I, and all active secondary TXs of Class II, each of which can be upper-bounded by a constant
independent of $n$ and $m$. Thus, there is a constant rate $K_4$, at which
the secondary tier can deliver packets to the intended primary destination node.
\end{proof}

\begin{center}
    {\bf APPENDIX III}\\
    {\bf Derivation of (\ref{cellsize})}
\end{center}

We know that given $a_p(n)\geq \sqrt 2\beta\log n/n$, the maximum throughput
per S-D pair for the primary tier is
$\Theta\left(\frac{1}{na_p(n)}\right)$. Since a primary packet is
divided into $N$ segments and then routed by $N$ parallel S-D
paths within the secondary tier, the supported rate for each
secondary S-D pair is required to be
$\Theta\left(\frac{1}{Nna_p(n)}\right)=\Theta\left(\frac{\sqrt{\log
m}}{\sqrt mna_p(n)}\right)$. As such, based
on~(\ref{sthroughput1s}), the corresponding secondary cell size
$a_s(m)$ needs to be set as
\begin{equation*}
    a_s(m)=\frac{n^2a_p^2(n)}{m\log m}
\end{equation*}
where we have $a_s(m)\geq 2\log m/m$ when $a_p(n)\geq\sqrt 2\beta\log n/n$.


\begin{thebibliography}{19}

\bibitem{Gupta:Capacity}P. Gupta and P. R. Kumar, ``The capacity of wireless networks,'' \textit{IEEE Transactions on Information Theory}, vol. 46, pp. 388-404, Mar. 2000.
\bibitem{Francheschetti:Closing}M. Francheschetti, O. Dousse, D. Tse, and P. Thiran, ``Closing the gap in the capacity of random wireless networks via percolation theory,'' \textit{IEEE Transactions on Information Theory}, vol. 53, no. 3., pp. 1009-1018, Mar. 2007.
\bibitem{Josan:Throughput}A. Josan, M. Liu, D. L. Neuhoff, and S. S. Pradhan, ``Throughput scaling in random wireless networks: a non-hierarchical multipath routing strategy,'' Preprint. Oct. 2007. [Online]. Available: http://arxiv.org/pdf/0710.1626.
\bibitem{Ozgur:old}A. Ozgur, O. Leveque, and D. Tse, ``Hierarchical cooperation achieves optimal capacity scaling in ad hoc networks,'' \textit{IEEE Transactions on Information Theory}, vol 53, no. 10, pp. 3549 - 3572, Oct. 2007.
\bibitem{Matthias:Mobility}M. Grossglauser and D. N. C. Tse, ``Mobility increases the capacity of ad hoc wireless network,'' \textit{IEEE/ACM Transaction on Networking}, vol. 10, no. 4, pp. 477-486, Aug. 2002.
\bibitem{Michael:Capacity}M. J. Neely and E. Modiano, ``Capacity and delay tradeoffs for ad-hoc mobile networks,'' \textit{IEEE Transactions on Information Theory}, vol. 51, no. 6, pp. 1917-1936, June 2005.
\bibitem{Gamal:Optimal}A. E. Gamal, J. Mammen, B. Prabhakar, and D. Shah, ``Optimal throughput-delay scaling in wirless networks--part I: the fluid model,'' \textit{IEEE Transaction on Information Theory}, vol. 52, no. 6, pp. 2568-2592, June 2006.
\bibitem{Ying: Optimal}L. Ying, S. Yang, and R. Srikant, ``Optimal delay-throughput tradeoffs in mobile ad hoc networks,'' \textit{IEEE Transaction on Information Theory}, vol. 54, no. 9, Sept. 2008.
\bibitem{Nikhil:Capacity}N. Bansal and Z. Liu, ``Capacity, delay and mobility in wireless ad-hoc networks,'' \textit{IEEE INFOCOM 2003}, vol. 2, pp. 1553-1563, Mar. 2003.
\bibitem{Ozgur:Throughput}A. Ozgur and O. Leveque, ``Throughput-delay tradeoff for hierarchical cooperation in ad hoc wireless networks,'' \textit{IEEE Transactions on Information Theory}, vol. 56, no. 3, pp. 1369-1377, Mar. 2010.
\bibitem{Jeon:Cognitive}S. Jeon, N. Devroye, M. Vu, S. Chung, and V. Tarokh, ``Cognitive networks achieve throughput scaling of a homogeneous network,'' preprint. Jan. 2008. [Online]. Available: http://arxiv.org/pdf/0801:0938.
\bibitem{Yin:Scaling}C. Yin, L. Gao, and S. Cui, ``Scaling laws for overlaid wireless networks: a cognitive radio network vs. a primary network,'' \textit{IEEE/ACM Transaction on Networking}, vol. 18, no. 4, pp. 1317-1329, Aug. 2010.
\bibitem{Goldsmith:Breaking}A. Goldsmith, S. Jafar, I. Maric, and S. Srinivasa, ``Breaking spectrum gridlock with cognitive radios: an information theoretic perspective,'' \textit{IEEE Proceedings}, vol. 97, no. 5, pp. 894-914, May 2009.
\bibitem{Devroye:Limits}N. Devroye, P. Mitran, and V. Tarokh, ``Limits on communications in a cognitive radio channel,'' \textit{IEEE Communication Magzine}, vol. 44, no. 6, pp. 44-49, June 2006.
\bibitem{Devroye:Achievable}N. Devroye, P. Mitran, and V. Tarokh, ``Achievable rates in cognitive radio channels,'' \textit{IEEE Transactions on Information Theory}, vol. 52, no. 5, pp. 1813-1827, Apr. 2006.
\bibitem{Liu:Pernode}D. Xu and X. Liu, ``Per User Throughput in Large Wireless Networks,'' \textit{IEEE Communications Society Conference on Sensor and Ad Hoc Communications and Networks (SECON)}, pp. 91-99, June 2008.
\bibitem{Daduna:Queueing}H. Daduna, \textit{Queueing Networks with Discrete Time Scale}, Springer, 2001.
\bibitem{Aldous:Reversible}D. Aldous and J. Fill, ``Reversible Markov chain and random walks on graph,'' [online]. Available: http://www.stat.berkeley.edu/users/aldous/RWG/book.html.
\end{thebibliography}
\end{document}